\newtheorem{definition}{Definition}[section]
\newtheorem{theorem}{Theorem}[section]
\newtheorem{problem}{Problem}[section]
\newtheorem{example}{Example}[section]
\newtheorem{lemma}{Lemma}[section]
\newtheorem{proof}{Proof}[section]
\newcommand{\qedr}{\hfill $\Box$}
\newcommand{\eg}{\emph{e.g.}\xspace}
\newcommand{\extra}[1]{}
\renewcommand{\paragraph}[1]{\smallskip \noindent {\bf #1}}
\newcommand{\papertext}[1]{}
\newcommand{\techreport}[1]{#1}
\newenvironment{denselist}{
    \begin{list}{\small{$\bullet$}}%
    {\setlength{\itemsep}{0ex} \setlength{\topsep}{0ex}
    \setlength{\parsep}{0pt} \setlength{\itemindent}{0pt}
    \setlength{\leftmargin}{1.5em}
    \setlength{\partopsep}{0pt}}}%
    {\end{list}}
\def\@copyrightspace{\relax}
\begin{document}

\title{Globally Optimal Crowdsourcing Quality Management}
\numberofauthors{3}
\author{
\alignauthor Akash Das Sarma\\
       \affaddr{Stanford University}\\
       \email{akashds@stanford.edu}
\alignauthor Aditya G. Parameswaran \\
        \affaddr{University of Illinois (UIUC)} \\
        \email{adityagp@illinois.edu}
\alignauthor Jennifer Widom \\
        \affaddr{Stanford University}\\
        \email{widom@cs.stanford.edu}
}

\maketitle


\begin{abstract}

We study {\em crowdsourcing quality management}, that is, given worker responses to a set of tasks, our goal is to jointly estimate the true answers for the tasks, as well as the quality of the workers. Prior work on this problem relies primarily on applying Expectation-Maximization (EM) on the underlying maximum likelihood problem to estimate true answers as well as worker quality. Unfortunately, EM only provides a locally optimal solution rather than a globally optimal one. Other solutions to the problem (that do not leverage EM) fail to provide global optimality guarantees as well.

In this paper, we focus on filtering, where tasks require the evaluation of a yes/no predicate, and rating, where tasks elicit integer scores from a finite domain. We design algorithms for finding the global optimal estimates of correct task answers and worker quality for the underlying maximum likelihood problem, and characterize the complexity of these algorithms.  Our algorithms conceptually consider all mappings from tasks to true answers (typically a very large number), leveraging two key ideas to reduce, by several orders of magnitude, the number of mappings under consideration, while
preserving optimality. We also demonstrate that these algorithms often find more accurate estimates than EM-based algorithms. This paper makes an important contribution towards understanding the inherent complexity of globally optimal crowdsourcing quality management.

\end{abstract}


\section{Introduction}
\label{intro}
Crowdsourcing~\cite{doan2011crowdsourcing} enables data scientists to collect human-labeled data
at scale for machine learning algorithms, including those involving
image, video, or text analysis.
However, human workers often make mistakes while answering these tasks.
Thus, {\em crowdsourcing quality management}, i.e., jointly
estimating human worker quality as well as answer quality---the probability of different answers for the tasks---is essential.
While knowing the answer quality helps us with the set of tasks at hand,
knowing the quality of workers helps us estimate
the true answers for future tasks, and in deciding whether to hire or fire specific workers.

In this paper, we focus on {\em rating tasks}, i.e., those where the answer is one
from a fixed set of ratings $\in \{1, 2, \ldots, R\}$.
This includes, as a special case, {\em filtering tasks},  where
the ratings are binary, i.e., $\{0, 1\}$.
Consider the following example:
say a data scientist intends to design a sentiment analysis algorithm for tweets.
To train such an algorithm, she needs a training dataset of tweets, rated on sentiment.
Each tweet needs to be rated on a scale of $\{1, 2, 3\}$, where $1$ is
negative, $2$ is neutral, and $3$ is positive.
A natural way to do this is to display each tweet, or item,
to human workers hired via a crowdsourcing marketplace
like Amazon's Mechanical Turk \cite{mturk}, and have workers rate each item on sentiment from 1---3.
Since workers may answer these rating tasks incorrectly,
we may have multiple workers rate each item.
Our goal is then to jointly estimate sentiment of each tweet
and the accuracy of the workers.

Standard techniques for solving this estimation problem
typically involve the use of the Expectation-Maximization (EM).
Applications of EM, however, provide no theoretical guarantees.
Furthermore, as we will show in this paper, EM-based algorithms
are highly dependent on
initialization parameters and can often get stuck in undesirable local optima.
Other techniques for optimal quality assurance, some specific to only filtering~\cite{dalvi2013aggregating,GKM,karger2011iterative}, are not provably optimal either,
in that they only give bounds on the errors of their estimates,
and do not provide the globally optimal quality estimates.
We cover other related work in the next section.

In this paper, we present a technique for globally optimal quality management,
that is, finding the maximum likelihood item (tweet) ratings, and worker quality estimates.
If we have $500$ tweets and $3$ possible ratings,
the total number of mappings from tweets to ratings is $3^{500}$.
A straightforward technique for globally optimal quality management is
to simply consider all possible mappings, and for each mapping, infer the overall
likelihood of that mapping. (It can be shown that the best worker error rates
are easy to determine once the mapping is fixed.)
The mapping with the highest likelihood is then the global optimum.

However, the number of mappings even in this simple example, $3^{500}$, is very large,
therefore making this approach infeasible.
Now, for illustration,
let us assume that workers are indistinguishable,
and they all have the same quality (which is unknown).
It is well-understood that at least on Mechanical Turk,
the worker pool is constantly in flux, and it is often hard to
find workers who have attempted enough tasks in order to get robust estimates of worker quality.
(Our techniques also apply to a generalization of this case.)

To reduce this exponential complexity, we use two simple,
but powerful ideas to greatly prune the set of mappings that need to be considered,
from $3^{500}$, to a much more manageable number.
Suppose we have 3 ratings for each tweet---a common strategy
in crowdsourcing is to get a small, fixed number of answers for each question.
 First, we hash ``similar'' tweets that receive the same set of worker ratings into a common bucket.
 As shown in Figure~\ref{fig:intro-ex}, suppose that 300 items each receive three ratings of $3$ (positive), 100 items each receive one rating of 1, one rating of 2 and one rating of 3, and 100 items each receive three ratings of $1$.
 That is, we have three buckets of items, corresponding to the worker answer sets $B_1=\{3,3,3\}$, $B_2=\{1,2,3\}$, and $B_3=\{1,1,1\}$. We now exploit the intuition that
 if two items receive the same set of worker responses they should be treated identically.
 We therefore only consider mappings that assign the same rating to all items (tweets) within a bucket.
 Now, since in our example, we only have 3 buckets each of which can be assigned a rating of 1,2, or 3,
 we are left with just $3^3=27$ mappings to consider.

\begin{figure}
\centering
\includegraphics[scale=0.4]{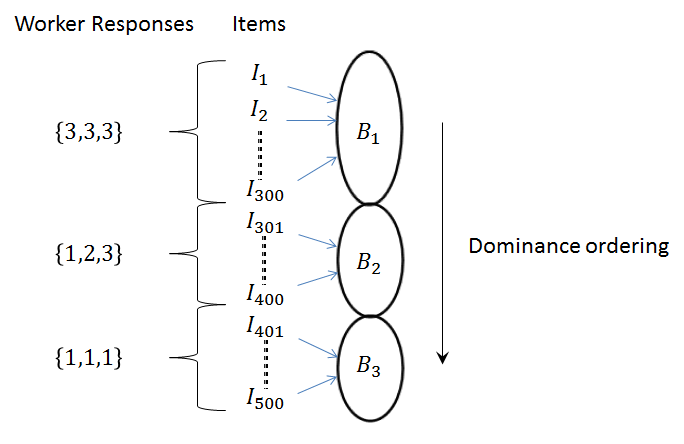}
\vspace{-15pt}
\caption{Mapping Example}
\vspace{-15pt}
\label{fig:intro-ex}
\end{figure}

Next, we impose a partial ordering on the set of buckets
based on our belief that items in certain buckets
should receive a higher final rating than items in other buckets.
Our intuition is simple: if an item received ``higher'' worker ratings than another item,
then its final assigned rating should be higher as well.
In this example, our partial ordering, or {\em dominance ordering} on the buckets is $B_1\geq B_2\geq B_3$, that is, intuitively items which received all three worker ratings
of $3$ should not have a true rating smaller than items in the second or third buckets where items receive lower ratings. This means that we can further reduce our space of $27$ remaining mappings by removing all those mappings that do not respect this partial ordering. The number of such remaining mappings is 10, corresponding to when all items in the buckets $(B_1,B_2,B_3)$ are mapped respectively to ratings $(3,3,3)$, $(3,3,2)$, $(3,3,1)$, $(3,2,2)$, $(3,2,1)$, $(3,1,1)$, $(2,2,2)$, $(2,2,1)$, $(2,1,1)$, and $(1,1,1)$.


In this paper, we formally show that restricting the mappings in
 this way does not take away from the optimality of the
 solution; i.e., there exists a mapping with the highest
 likelihood that obeys the property that all items
 with same scores are mapped to the same rating,
and at the same time obeys the dominance ordering relationship as described above.


Our list of contributions are as follows:
\begin{denselist}
\item We develop an intuitive algorithm based on simple,
but key insights that finds a provably optimal maximum likelihood solution
to the problem of jointly estimating true item labels and worker error behavior
for crowdsourced filtering and rating tasks.
Our approach involves reducing the space of potential ground truth mappings
while preserving optimality, enabling an
exhaustive search on an otherwise prohibitive domain.
\item Although we primarily focus on and initially derive
our optimality results for the setting where workers independently
draw their responses from a common error distribution,
we also propose generalizations to harder settings,
for instance, when workers are known to come from distinct classes with separate error distributions.
That said, even the former setting is commonly used in practice, and
represents a significant first step
towards understanding the nature and complexity
of exact globally optimal solutions for this joint estimation problem.


\item We perform experiments on synthetic and real datasets
to evaluate the performance of our algorithm on a variety of different metrics.
Though we optimize for likelihood, we also test
the accuracy of predicted item labels and worker response distributions.
We show that our algorithm also does well on these other metrics. We test our algorithm on a real dataset where our assumptions about the worker model do not necessarily hold, and show that our algorithm still yields good results.
\end{denselist}

\subsection{Related Literature}
\label{sec:related}

Crowdsourcing is gaining importance as a platform for a variety of different applications where automated machine learning techniques don't always perform well, 
\eg, filtering~\cite{crowdscreen} or labeling~\cite{DBLP:journals/jmlr/RaykarY12,get-another-label,welinder2010multidimensional} of text, images, or video, and entity resolution \cite{entity-matching,DBLP:journals/pvldb/VesdapuntBD14,crowder,steven-iterative}.
One crucial problem in crowdsourced applications is that of worker quality: since human workers often make mistakes, it is important to model and characterize their behavior in order to aggregate high-quality answers.

\smallskip
\noindent {\bf EM-based joint estimation techniques.}
We study the particular problem of jointly estimating hidden values (item ratings) and a related latent set of parameters (worker error rates) given a set of observed data (worker responses).
A standard machine learning technique for estimating parameters with unobserved latent variables is Expectation Maximization~\cite{GuptaEMSurvey}. There has been significant work in using EM-based techniques to estimate true item values and worker error rates, such as \cite{dawid-skene,raykar-whom-to-trust,whitehill2009whose}, and subsequent modifications using Bayesian techniques~\cite{carpenter2011hierarchical,donmez-learning-accuracy}. In \cite{raykar2010learning}, the authors use a supervised learning approach to learn a classifier and the ground truth labels simultaneously. In general, these machine learning based techniques only provide probabilistic guarantees and cannot ensure optimality of the estimates. We solve the problem of finding a global, provably maximum likelihood solution for both the item values and the worker error rates. That said, our worker error model is simpler than the models considered in these papers---in particular, we do not consider worker identities or difficulties of individual items. While we do provide generalizations to our approach that relax some of these assumptions, they can be inefficient in practice. However, we study this simpler model in more depth, providing optimality guarantees. Since our work represents the first providing optimality guarantees (even for a restricted setting), it represents an important step forward in our understanding of crowdsourcing quality management.  Furthermore, anecdotally, even the simpler model is commonly used for platforms like Mechanical Turk, where the workers are fleeting.



\smallskip
\noindent {\bf Other techniques with no guarantees.}
There has been some work that adapts techniques different from EM to solve the problem
of worker quality estimation. For instance, Chen et al.~\cite{chen2013optimistic} adopts
approximate Markov Decision Processes to perform simultaneous worker quality estimation and budget
allocation. Liu et al.~\cite{LPI} uses variational inference for worker quality management
on filtering tasks (in our case, our techniques apply to both filtering and rating).
Like EM-based techniques, these papers do not provide any theoretical guarantees.

\smallskip
\noindent {\bf Weaker guarantees.} There has been a lot of recent work on providing
partial probabilistic guarantees or asymptotic guarantees on accuracies of answers or worker estimates, for various
problem settings and assumptions, and using various techniques.
We first describe the problem settings adopted by these papers, then their solution techniques, and then describe
their partial guarantees.

The most general problem setting adopted by these papers is identical to us (i.e., rating tasks with arbitrary
bipartite graphs connecting workers and tasks)~\cite{zhang2014spectral}; most papers
focus only on filtering~\cite{dalvi2013aggregating,GKM,karger2011iterative}, or operate only
when the graph is assumed to be randomly generated~\cite{KOS,karger2011iterative}. Furthermore, most of these
papers assume that the false positive and false negative rates are the same.

The papers draw from various techniques, including just spectral methods~\cite{dalvi2013aggregating,GKM},
just message passing~\cite{karger2011iterative}, a combination of spectral methods and message passing~\cite{KOS},
or a combination of spectral methods and EM~\cite{zhang2014spectral}.

In terms of guarantees, most of the papers provide probabilistic bounds~\cite{dalvi2013aggregating,KOS,karger2011iterative,zhang2014spectral}, while some only provide asymptotic bounds~\cite{GKM}. For example, Dalvi et al.~\cite{dalvi2013aggregating}, which represents the state of the art over multiple papers~\cite{GKM,KOS,karger2011iterative},
show that under certain assumptions about the graph structure (depending on the eigenvalues)
the error in their estimates of worker quality is lower than some quantity with probability greater than $1-\delta$.

Thus, overall, all the work discussed so far provides probabilistic guarantees on their item value predictions, and error bound guarantees on their estimated worker qualities. In contrast, we consider the problem of finding a global maximum likelihood estimate for the correct answers to tasks and the worker error rates.

\smallskip
\noindent {\bf Other related papers.} Joglekar et al.~\cite{joglekar2013evaluating} consider the problem of finding confidence bounds on worker error rates. Our paper is complementary to theirs in that, while they solve the problem of obtaining confidence bounds on the worker error rates, we consider the problem of finding the maximum likelihood estimates to the item ground truth and worker error rates.

Zhou et al.~\cite{zhou2012learning,zhouaggregating} use minimax entropy to perform worker
quality estimation as well as inherent item difficulty estimation;
here the inherent item difficulty is represented as a vector. Their technique only applies
when the number of workers attempting each task is very large;
here, overfitting (given the large number of hidden parameters) is no longer a concern.
For cases where the number of workers attempting each task is in the order of 50 or 100
(highly unrealistic in practical applications), the authors demonstrate that the
scheme outperforms vanilla EM.

\smallskip
\noindent
{\bf Summary.} In summary, at the highest level, our work differs from all previous
work in its focus on finding a globally optimal solution to the maximum likelihood problem.
We focus on a simpler setting, but do so in more depth, representing
significant progress in our understanding of global optimality.
Our globally optimal solution
uses simple and intuitive insights to reduce the search space of
possible ground truths, enabling exhaustive evaluation. Our general
framework leaves room for further study and has the potential for more sophisticated algorithms that build on our reduced space.

\section{Preliminaries}
\label{prelims}

We start by introducing some notation, and then describe
the general problem that we study in this paper;
specific variants will be considered in subsequent sections.

\smallskip
\noindent {\bf Items and Rating Questions.}
We let ${\bf I}$ be a set of $|\mathbf{I}|=n$ items.
Items could be, for example, images, videos, or pieces of text.

Given an item $I \in {\bf I}$, we can ask a worker $w$ to answer
{\em a rating question} on that item.
That is, we ask a worker:
{\em What is your rating $r$ for the item $I$?}.
We allow workers to rate the item with any value  $\in \{1, 2, \ldots, R\}$.

\begin{example}
\label{ex:prelim}
Recall our example application from Section \ref{intro}, where we have $R=3$ and workers can rate tweets
as being negative ($r=1$), neutral ($r=2$), or positive ($r=3$). Suppose we have two items, $\mathbf{I}=\{I_1,I_2\}$ where $I_1$ is positive, or has a true rating of 3, and $I_2$ is neutral, or has a true rating of 2.
\end{example}

\smallskip
\noindent {\bf Response Set.}
We assume that each item $I$ is shown to $m$ 
arbitrary workers,
and therefore receives $m$ ratings $\in [1, R]$.
We denote the set of ratings given by workers for item $I$ as $M(I)$ and write $M(I)=(v_R,v_{R-1},\ldots,v_1)$ if item $I\in\mathbf{I}$ receives $v_i$ responses of rating ``$i$'' across workers, $1\leq i\leq R$. Thus, $\overset{R}{\underset{i=1}{\sum}} v_i = m$.
We call $M(I)$ the {\em response set of $I$}, and $M$ the {\em worker response set} in general.

Continuing with Example \ref{ex:prelim}, suppose we have $m=2$ workers rating each item on the scale of $\{1,2,3\}$. Let $I_1$ receive one worker response of $2$ and one worker response of $3$. Then, we write $M(I_1)=(1,1,0)$. Similarly, if $I_2$ receives one response of $3$ and one response of $1$, then we have $M(I_2)=(1,0,1)$.

\smallskip
\noindent {\bf Modeling Worker Errors.}
We assume that every item $I \in {\bf I}$ has a {\em true rating}
in $[1, R]$ that is not known to us in advance.
What we can do is estimate the true rating using the worker response set.
To estimate the true rating, we need to be able to estimate the probabilities
of worker errors.

We assume every worker draws their responses from a common (discrete)
{\em response probability matrix}, $p$, of size $R \times R$.
Thus, $p(i,j)$ is the probability that a worker rates an item with true value $j$ as having rating $i$.
Consider the following response probability matrix of the workers described in our example ($R=3$):
 $$p=
 \begin{bmatrix}
 0.7 & 0.1 & 0.2\\
 0.2 & 0.8 & 0.2\\
 0.1 & 0.1 & 0.6
 \end{bmatrix}
 $$ H
 ere, the $j^{th}$ column represents the different probabilities of worker responses when an item's true rating is $j$. 
 Correspondingly, the $i^{th}$ row represents the probabilities that a worker will rate an item as $i$.
We have $p(1,1)=0.7, p(2,1)=0.2, p(3,1)=0.1$ meaning that given an item whose true rating is 1, workers will rate the item correctly
 with probability $0.7$, give it a rating of 2 with probability 0.2, and give it a rating of 3 with probability 0.1.
The matrix $p$ is in general not known to us.
We aim to estimate both $p$ and the true ratings of items in ${\bf I}$
as part of our computation.

Note that we assume that every response to a rating question
returned by every worker
is independently and identically drawn from this matrix:
thus, each worker responds to each rating question independently of
other questions they may have answered, and other ratings for
the same question given by other workers; and furthermore, all the workers have the same response matrix. In our example, we assume that all four responses (2 responses to each of $I_1,I_2$) are drawn from this distribution.
We recognize that assuming the same response matrix is somewhat stringent---we will consider
generalizations to the case where we can categorize workers
into classes (each with the same response matrix) in Section~\ref{sec:extensions}.
That said, while our techniques can still indeed be applied when
there are a large number of workers or worker classes with distinct response matrices,
it may be impractical.
Since our focus is on understanding the theoretical limits
of global optimality for a simple case, we defer to
future work fully generalizing our techniques to apply to workers with
distinct response matrices, or when worker answers are not independent of each other.

\smallskip
\noindent {\bf Mapping and Likelihood}
We call a function $f:\mathbf{I}\rightarrow\{1,2,\ldots,R\}$ that assigns
ratings to items a \emph{mapping}. The set of actual ratings of items is also a mapping. We call that the {\em ground truth mapping}, $T$. For Example \ref{ex:prelim}, $T(I_1)=3,T(I_2)=2$.

Our goal is to find the \emph{most likely} mapping
and worker response matrix given the response set $M$.
We let the probability of a specific mapping, $f$,
being the ground truth mapping,
given the worker response set $M$ and response probability matrix $p$
be denoted by $\Pr(f|M,p)$.
Using Bayes rule, we have $\Pr(f|M,p)=k\Pr(M|f,p)$, where $\Pr(M|f,p)$ is the probability of seeing worker response set $M$ given that $f$ is the ground truth mapping and $p$ is the true worker error matrix. Here, $k$ is the constant given by $k=\frac{\Pr(f)}{\Pr(M)}$, where $\Pr(f)$ is the (constant) apriori probability of being the ground truth mapping and $\Pr(M)$ is the (constant) apriori probability of seeing worker response set $M$. Thus, $\Pr(M|f,p)$ is the probability of workers providing the responses in $M$, had $f$ been the ground truth mapping and $p$ been the true worker error matrix. We call this value the \emph{likelihood} of the mapping-matrix pair, $f,p$.

We illustrate this concept on our example. We have $M(I_1)=(1,1,0)$ and $M(I_2)=(1,0,1)$. Let us compute the likelihood of the pair $f,p$ when $f(I_1)=2,f(I_2)=2$ and
$p$ is the matrix displayed above. We have
$$\Pr(M|f,p)=\Pr(M(I_1)|f,p)\Pr(M(I_2)|f,p)$$ assuming that rating questions on items are answered independently. The quantity $\Pr(M(I_1)|f,p)$ is the probability that workers drawing their responses from $p$ respond with $M(I_1)$ to an item with true rating $f(I_1)$. Again, assuming independence of worker responses, this quantity can be written as the product of the probability of seeing each of the responses that $I_1$ receives. If $f$ is given as the ground truth mapping, we know that the probability of receiving a response of $i$ is $p(i,f(I_1))=p(i,2)$. Therefore, the probability of seeing $M(1,1,0)$, that is one response of $3$ and one response of $2$, is $p(3,2)p(2,2)=0.1\times 0.8=0.08$. Similarly, $\Pr(M(I_2)|f,p)=p(3,2)\times p(1,2)=0.1\times 0.1=0.01$. Combining all of these expressions, we have $\Pr(M|f,p)=0.01\times 0.08=8\times10^{-4}$.
\begin{table}
\vspace{5pt}
\centering
\small
\begin{tabular}{ | c | c | }
\hline
Symbol & Explanation \\ \hline \hline
$\mathbf{I}$ & Set of items \\ \hline
$M$ & Items-workers response set \\ \hline
$f$ & Items-values mapping  \\ \hline
$p$ & Worker response probability matrix\\ \hline
$\Pr(M|f,p)$ & Likelihood of $(f,p)$ \\ \hline
$m$ & Number of worker responses per item \\ \hline
$T$ & Ground truth mapping \\ \hline
\end{tabular}
\vspace{-5pt}
\caption{Notation Table}
\label{notation_table}
\vspace{-15pt}
\end{table}
Thus, our goal can be restated as:
\begin{problem}[Maximum Likelihood Problem]
Given $M,\mathbf{I}$, find $$\underset{f,p}{\arg\!\max}\Pr(M|f,p)$$
\end{problem}
A naive solution would be to look at every possible mapping $f'$, compute $p'=\underset{p}\arg\!\max \Pr(M|f',p)$ and choose the $f'$ maximizing the likelihood value $\Pr(M|f',p')$. The number of such mappings, $R^{|\mathbf{I}|}$, is however exponentially large.

We list our notation in Table \ref{notation_table} for ready reference.

\section{Filtering Problem}
\label{sec:filtering}
Filtering can be regarded as a special case of rating where $R=2$. We discuss it separately, first, because its analysis is significantly simpler, and at the same time provides useful insights that we then build upon for the generalization to rating, that is, to the case where $R>2$. For example, consider the filtering task of finding all images of Barack Obama from a given set of images. For each image, we ask workers the question ``is this a picture of Barack Obama''. Images correspond to items and the question ``is this a picture of Barack Obama'' corresponds to the filtering task
on each item.
We can represent an answer of ``no'' to the question above
by a score 0, and an answer of ``yes'' by a score 1.
Each item $I\in\mathbf{I}$ now has an
inherent true value in $\{0,1\}$ where a true value of
1 means that the item is one that satisfies the filter,
in this case, the image is one of Barack Obama.
Mappings here are functions $f:\mathbf{I}\rightarrow\{0,1\}$.

Next, we formalize the filtering problem in Section \ref{formal}, describe our algorithm in Section \ref{algo}, prove a maximum likelihood result in Section \ref{sec:filter_proof} and evaluate our algorithm in Section \ref{sec:filter_exp}.

\subsection{Formalization}
\label{formal}

Given the response set $M$, we wish to find the maximum likelihood mapping $f:\mathbf{I}\rightarrow\{0,1\}$ and $2\times 2$ response probability matrix, $p$. For the filtering problem, each item has an inherent true value of either 0 or 1, and sees $m$ responses of 0 or 1 from different workers. If item $I$ receives $m-j$ responses of 1 and $j$ responses of 0, we can represent its response set with the tuple or pair $M(I)=(m-j,j)$.

Consider a worker response probability matrix of $p=
 \begin{bmatrix}
 0.7 & 0.2\\
 0.3 & 0.8
 \end{bmatrix}
 $. The first column represents the probabilities of worker responses when an item's true rating is $0$ and the second column represents probabilities when an item's true rating is $1$.
 Given that all workers have the same response probabilities, we can characterize their response matrix by just the corresponding worker false positive (FP) and false negative (FN) rates, $e_0$ and $e_1$. That is, $e_0=p(1,0)$ is the probability that a worker responds $1$ to an item whose true value is $0$, and $e_1=p(0,1)$ is the probability that a worker responds $0$ to an item whose true value is $1$. We have $p(1,1)=1-e_1$ and $p(0,0)=1-e_0$. Here, we can describe the entire matrix $p$ with just the two values, $e_0=0.3$ and $e_1=0.2$.

\smallskip
\noindent {\bf Filtering Estimation Problem.}
Let $M$ be the observed response set 
on item-set $\mathbf{I}$. Our goal is to find 
$$f^*,e_0^*,e_1^*=\underset{f,e_0,e_1}{\arg\!\max} \Pr(M|f,e_0,e_1)$$
Here, $\Pr(M|f,e_0,e_1)$ is the probability of getting the response set $M$, given that $f$ is the ground truth mapping and the true worker response matrix is defined by $e_0,e_1$.

\smallskip
\noindent {\bf Dependance of Response Probability Matrices on Mappings.}
Due to the probabilistic nature of our workers, for a fixed ground truth mapping $T$, 
different worker error rates, $e_0$ and $e_1$ can produce the same response set $M$. These different worker error rates, however have varying likelihoods of occurrence. This leads us to observe that worker error rates ($e_0,e_1$) and mapping functions ($f$) are not independent and are related through any given $M$. In fact, we show that for the maximum likelihood estimation problem, fixing a mapping $f$ enforces a maximum likelihood choice of $e_0,e_1$. We leverage this fact to simplify our problem from searching for the maximum likelihood tuple $f,e_0,e_1$ to just searching for the maximum likelihood mapping, $f$. Given a response set $M$ and a mapping, $f$, we call this maximum likelihood choice of $e_0,e_1$ as the parameter set of $f,M$, and represent it as $\text{Params}(f,M)$. The choice of $\text{Params}(f,M)$ is very intuitive and simple. We show that we just can estimate $e_0$ as the fraction of times a worker disagreed with $f$ on an item $I$ in $M$ when $f(I)=0$, and correspondingly, $e_1$ as the fraction of times a worker responded $0$ to an item $I$, when $f(I)=1$. Under this constraint, we can prove that our original estimation problem, $$\underset{f,e_0,e_1}{\arg\!\max} \Pr(M|f,e_0,e_1)$$ simplifies to that of finding $$\underset{f}{\arg\!\max} \Pr(M|f,e_0^*,e_1^*)$$ where $e_0^*,e_1^*$ are the constants given by $\text{Params}(f,M)$.

\noindent
\begin{example}
\label{ex:filter}
Suppose we are given $4$ items $\mathbf{I}=\{I_1,I_2,I_3,I_4\}$ with ground truth mapping $T=(T(I_1),T(I_2),T(I_3),T(I_4))=(1,0,1,1)$. 

Suppose we ask $m=3$ workers to evaluate each item and receive the following number of (``1'',``0'') responses for each respective item: $M(I_1)=(3,0),M(I_2)=(1,2),M(I_3)=(2,1),M(I_4)=(2,1)$. Then, we can evaluate our worker false positive and false negative rates as described above: $e_0=\frac{0+1+1}{3+3+3}=\frac{2}{9}$ (from items $I_1$,$I_3$ and $I_4$) and $e_1=\frac{1}{3}$ (from $I_2$).
\end{example}

We shall henceforth refer to $\Pr(M|f)=\Pr(M|f,\text{Params}(f,M))$ as the \emph{likelihood of a mapping} $f$. For now, we focus on the problem of finding the maximum likelihood mapping with the understanding that finding the error rates is straightforward given the mapping is fixed. In Section \ref{sec:filter_correspondence}, 
we formally show that the problem of jointly finding the maximum likelihood response matrix and mapping can be solved by just finding the most likely mapping $f^*$. The most likely triple $f,e_0,e_1$ is then given by $f^*,\text{Params}(f^*,M)$.

It is easy to calculate the likelihood of a given mapping. We have $\Pr(M|f)=\underset{I\in \mathbf{I}}\prod \Pr(M(I)|f,e_0,e_1)$, where $e_0,e_1=\text{Params}(f,M)$ and $\Pr(M(I)|f,e_0,e_1)$ is the probability of seeing the response set $M(I)$ on an item $I\in\mathbf{I}$.
Say $M(I)=(m-j,j)$. Then, we have
\[
\Pr(M(I)|f)=
\begin{cases}
(1-e_1)^{m-j}e_1^j & \text{for } f(I)=1\\
e_0^{m-j}(1-e_0)^j & \text{for } f(I)=0\\
\end{cases}
\]
This can be evaluated in $O(m)$ for each $I\in\mathbf{I}$ by doing one pass over $M(I)$. Thus, $\Pr(M|f)=\prod_{I\in \mathbf{I}} \Pr(M(I)|f)$ can be evaluated in $O(m|\mathbf{I}|)$. We use this as a building block in our algorithm below.

\subsection{Globally Optimal Algorithm}
\label{algo}
In this section, we describe our algorithm for finding the maximum likelihood mapping, given a response set $M$ on an item set $\mathbf{I}$.
 A naive algorithm could be to scan all possible mappings, $f$, calculating for each, $e_0^*,e_1^*=\text{Params}(f,M)$ and the likelihood $\Pr(M|f,e_0^*,e_1^*)$. 
 The number of all possible mappings is, however, exponential in the number of items. Given $n=|\mathbf{I}|$ items, we can assign a value of either $0$ or $1$ to any of them, giving rise to a total of $2^n$ different mappings. This makes the naive algorithm prohibitively expensive.

Our algorithm is essentially a pruning based method that uses two simple insights (described below) to narrow the search for the maximum likelihood mapping. Starting with the entire set of $2^n$ possible mappings, we eliminate all those that do not satisfy one of our two requirements, and reduce the space of mappings to be considered to $O(m)$, where $m$ is the number of worker responses per item. We then show that just an exhaustive evaluation on this small set of remaining mappings is still sufficient to find a global maximum likelihood mapping.

We illustrate our ideas on the example from Section \ref{formal}, represented graphically in Figure \ref{fig:filter-example}. We will explain this figure below.

\begin{figure}
\centering
\includegraphics[scale=0.4]{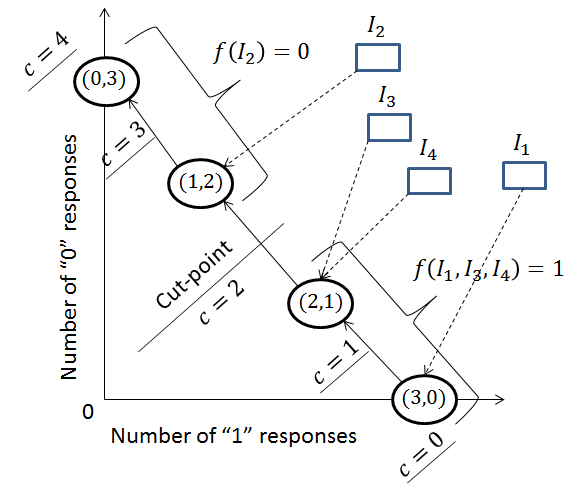}
\vspace{-10pt}
\caption{Filtering: Example \ref{ex:filter} (cut-point = 2)}
\vspace{-15pt}
\label{fig:filter-example}
\end{figure}

\smallskip
\noindent {\bf Bucketizing.}
Since we assume (for now) that all workers draw their responses from the same probability matrix $p$ (i.e., have the same $e_0,e_1$ values), we observe that items with the exact same set of worker responses can be treated identically.
This allows us to bucket items based on their observed response sets. Given that there are $m$ worker responses for each item, we have $m+1$ buckets, starting from $m$ ``1'' and zero ``0'' responses, down to zero ``1'' and $m$ ``0'' responses. 
We represent these buckets in Figure~\ref{fig:filter-example}. The x-axis represents the number of $1$ responses an item receives and the y-axis represents the number of $0$ responses an item receives. Since every item receives exactly $m$ responses, all possible response sets lie along the line $x+y=m$. We hash items into the buckets corresponding to their observed response sets.
Intuitively, since all items within a bucket receive the same set of responses and are for all purposes identical, two items within a bucket should receive the same value. It is more reasonable to give both items a value of 1 or 0 than to give one of them a value of 1 and the other 0. 

In our example (Figure \ref{fig:filter-example}), the set of possible responses to any item is $\{(3,0),(2,1),(1,2),(0,3)\}$, where $(3-j,j)$ represents seeing $3-j$ responses of ``1'' and $j$ responses of ``0''. 
We have $I_1$ in the bucket $(3,0)$, $I_3,I_4$ in the bucket (2,1), $I_2$ in the bucket $(1,2)$, and an empty bucket $(0,3)$. We only consider mappings, $f$, where items in the same bucket are assigned the same value, that is, $f(I_3)=f(I_4)$. This leaves $2^4$  mappings corresponding to assigning a value of 0/1 to each bucket. In general, given $m$ worker responses per item, we have $m+1$ buckets and $2^{m+1}$ mappings that satisfy our bucketizing condition. Although for this example $m+1=n$, typically we have $m\ll n$. 

\smallskip
\noindent {\bf Dominance Ordering.} Second, we observe that buckets have an inherent ordering. If workers are better than random, that is, if their false positive and false negative error rates are less than 0.5, we intuitively expect items with more ``1'' responses to be more likely to have true value $1$ than items with fewer ``1'' responses. Ordering buckets by the number of ``1'' responses, we have $(m,0)\rightarrow(m-1,1)\rightarrow\ldots\rightarrow(1,m-1)\rightarrow(0,m)$, where bucket $(m-j,j)$ contains all items that received $m-j$ ``1'' responses and $j$ ``0'' responses. We eliminate all mappings that give a value of $0$ to a bucket with a larger number of ``1'' responses while assigning a value of $1$ to a bucket with fewer ``1'' responses. We formalize this intuition as a {\em dominance relation}, or ordering on buckets, $(m,0)>(m-1,1)>\ldots>(1,m-1)>(0,m)$, and only consider mappings where dominating buckets receive a value not lower than any of their dominated buckets. 
%

Let us impose this dominance ordering on our example. For instance, $I_1$ (three workers respond ``1'') is more likely to have ground truth value ``1'', or dominates, $I_3,I_4$, (two workers respond ``1''), which in turn dominate $I_2$. So, we do not consider mappings that assign a value of ``0'' to a $I_1$ and ``1'' to either of $I_3,I_4$. Figure~\ref{fig:filter-example} shows the dominance relation in the form of directed edges, with the source node being the dominating bucket and the target node being the dominated one. Combining this with our bucketizing idea, we discard all mappings which assign a value of ``0'' to a dominating bucket (say response set $(3,0)$) while assigning a value of ``1'' to one of its dominated buckets (say response set $(2,1)$).

\smallskip
\noindent {\bf Dominance-Consistent Mappings.} We consider the space of mappings satisfying our above bucketizing and dominance constraints, and call them \emph{dominance}\emph{-consistent mappings}. We can prove that the maximum likelihood mapping from this small set of mappings is in fact a global maximum likelihood mapping across the space of all possible ``reasonable'' mappings: mappings corresponding to better than random worker behavior.

To construct a mapping satisfying our above two constraints, we choose a \emph{cut-point} to cut the ordered set of response sets into two partitions. The corresponding \emph{dominance-consistent mapping} then assigns value ``1'' to all (items in) buckets in the first (better) half, and value ``0'' to the rest. For instance, choosing the cut-point between response sets $(2,1)$ and $(1,2)$ in Figure \ref{fig:filter-example}  results in the corresponding dominance-consistent mapping, where $\{I_1,I_3,I_4\}$, are mapped to ``1'', while $\{I_2\}$, is mapped to ``0''. 
We have 5 different cut-points, $0,1,2,3,4$, each corresponding to one dominance-consistent mapping. Cut-point 0 corresponds to the mapping where all items are assigned a value of 0 and cut-point 4 corresponds to the mapping where all items are assigned a value of 1. In particular, the figure shows the dominance-consistent mapping corresponding to the cut-point $c=2$. In general, if we have $m$ responses to each item, we obtain $m+2$ dominance-consistent mappings.

\begin{definition}[Dominance-consistent mapping $f^c$]
\label{cut_function}
For any cut-point $c\in 0,1,\ldots,m+1$, we define the corresponding dominance-point mapping $f^c$ as
\begin{equation*}
f^c(I)=
\begin{cases}
  1  &  \text{if } M(I)\in\{(m,0),\ldots,(m-c+1,c-1)\} \\
  0  &  \text{if } M(I)\in\{(m-c,c),\ldots,(0,m)\} \\
\end{cases}
\end{equation*}
\end{definition}


Our algorithm enumerates all dominance-consistent mappings, computes their likelihoods, and returns the most likely one among them. \papertext{The details of our algorithm can be found in our technical report.} As there are $(m+2)$ mappings, each of whose likelihoods can be evaluated in $O(m|\mathbf{I}|)$, (See Section \ref{formal}) the running time of our algorithm is $O(m^2|\mathbf{I}|)$.

\techreport{
\begin{algorithm}
\caption{Cut-point Algorithm}
\label{binary_sampling_algo}
\begin{algorithmic}[1]
\STATE $I:=\text{Input Item-set}$
\STATE $M:=\text{Input Response Set}$
\STATE $f:=\{\}$ \COMMENT{Different dominance-consistent (mappings)}
\STATE $e_0:=\{\}$ \COMMENT{$e_0$ rates corresponding to cut-functions}
\STATE $e_1:=\{\}$ \COMMENT{$e_1$ rates corresponding to cut-functions}
\STATE $Likelihood:=\{\}$ \COMMENT{Likelihoods corresponding to cut-functions}
\\ \COMMENT{Enumerating across cut-points:}
\FOR{$c$ in $\{0,1,\ldots,m+1\}$}
    \STATE{$e_0[c],e_1[c]:=\text{Params}(f[c],M)$}
    \STATE{$Likelihood[c]:=\Pr(M|f[c],e_0[c],e_1[c])$}
\ENDFOR
\STATE $c^*:=\underset{c}{\arg\!\max} Likelihood[c]$
\STATE RETURN{$(f[c^*],e_0[c^*],e_1[c^*])$}
\end{algorithmic}
\end{algorithm}
}

In the next section we prove that in spite of only searching the much smaller space of dominance-consistent mappings, our algorithm finds a global maximum likelihood solution.

\subsection{Proof of Correctness}
\label{sec:filter_proof}
\smallskip
\noindent {\bf Reasonable Mappings.}
A reasonable mapping is one which corresponds to a better than random worker behavior. Consider a mapping $f$ corresponding to a false positive rate of $e_0> 0.5$ (as given by $\text{Params}(f,M)$). This mapping is \emph{unreasonable} because workers perform worse than random for items with value ``0''. Given $\mathbf{I},M$, let $f:\mathbf{I}\rightarrow\{0,1\}$, with $e_0,e_1=\text{Params}(f,M)$ be a mapping such that $e_0\leq 0.5$ and $e_1\leq 0.5$. Then $f$ is a \emph{reasonable mapping}. It is easy to show that all dominance-consistent mappings are reasonable mappings.

Now, we present our main result on the optimality of our algorithm. We show that in spite of only considering the local space of dominance-consistent mappings, we are able to find a global maximum likelihood mapping.
\begin{theorem}[Maximum Likelihood] We let $M$ be the given response set on the input item-set $\mathbf{I}$. Let $\mathbf{F}$ be the set of all reasonable mappings and $\mathbf{F}^{dom}$ be the set of all dominance-consistent mappings. 
Then,
\[
\max_{f^*\in \mathbf{F^{dom}}}\Pr(M|f^*)=\max_{f\in \mathbf{F}}\Pr(M|f)
\]
\end{theorem}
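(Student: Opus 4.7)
The easy direction, $\max_{f^*\in \mathbf{F}^{dom}}\Pr(M|f^*)\leq\max_{f\in \mathbf{F}}\Pr(M|f)$, is immediate since the paper already notes that every dominance-consistent mapping is reasonable, so $\mathbf{F}^{dom}\subseteq \mathbf{F}$. The heart of the proof is therefore the reverse inequality. My plan is to exhibit, for an \emph{arbitrary} $f\in \mathbf{F}$, a specific $g\in \mathbf{F}^{dom}$ with $\Pr(M|g)\geq \Pr(M|f)$; taking $f$ to be an argmax over $\mathbf{F}$ then finishes the job.

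Given $f\in \mathbf{F}$, let $(e_0^*,e_1^*)=\text{Params}(f,M)$; because $f$ is reasonable we have $e_0^*,e_1^*\leq 1/2$. I would define $g$ pointwise by choosing, for each item $I$ with $M(I)=(m-j,j)$, the label that maximizes the per-item likelihood under the \emph{fixed} parameters $(e_0^*,e_1^*)$:
\[
g(I)=1 \iff (1-e_1^*)^{m-j}(e_1^*)^{j} \;\geq\; (e_0^*)^{m-j}(1-e_0^*)^{j},
\]
with ties broken in favor of $1$. Since the per-item contribution at $(e_0^*,e_1^*)$ under $g$ dominates that under $f$ for every $I$, multiplying across items gives $\Pr(M|g,e_0^*,e_1^*)\geq \Pr(M|f,e_0^*,e_1^*)=\Pr(M|f)$.

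Next I would verify $g\in\mathbf{F}^{dom}$. Bucketizing is immediate, because $g(I)$ depends on $I$ only through $j$. For dominance, take logs of the defining inequality to rewrite it as
\[
(m-j)\log\frac{1-e_1^*}{e_0^*} \;\geq\; j\log\frac{1-e_0^*}{e_1^*}.
\]
Both logarithms are non-negative thanks to $e_0^*,e_1^*\leq 1/2$, so the left side is non-increasing in $j$ and the right side is non-decreasing in $j$. Thus the set of $j$'s for which $g$ outputs $1$ is a prefix $\{0,1,\ldots,c^*-1\}$ for some cut-point $c^*$, matching Definition~\ref{cut_function}. Hence $g$ is dominance-consistent.

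Finally, since $\text{Params}(g,M)$ is the MLE estimator of $(e_0,e_1)$ for the mapping $g$, we have $\Pr(M|g)=\Pr(M|g,\text{Params}(g,M))\geq \Pr(M|g,e_0^*,e_1^*)\geq \Pr(M|f)$, which is exactly what we needed. The main obstacle I anticipate is handling degenerate boundary cases: if $e_0^*$ or $e_1^*$ equals $0$ (or $1/2$), the threshold rule above uses $\log 0$ or flattens out, so the inequalities need separate justification. These can be dispatched by arguing directly on the per-item likelihoods (e.g., $e_0^*=0$ forces every item labeled $0$ by $f$ to have $j=m$, in which case the bucketizing and dominance properties are trivially compatible), or by a perturbation/continuity argument that reduces to the strict case.
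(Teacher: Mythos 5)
Your proposal is correct, but it takes a genuinely different route from the paper. The paper argues by iterative local repair: given a reasonable $f$ that violates bucketizing or dominance, it swaps the labels of an offending pair of items, proves the swap does not decrease the likelihood via the inequality $\frac{a^a(c-a)^{c-a}}{b^b(c-b)^{c-b}}\geq 1$ (valid when $a>b$ and $a+b\geq c$, the latter following from reasonableness, i.e., $p_{11},p_{00}\geq\frac{1}{2}$), and then repeats until no inconsistencies remain. You instead perform a single global re-labeling: freeze $(e_0^*,e_1^*)=\text{Params}(f,M)$, pick the per-item maximum-likelihood label $g(I)$ under those fixed parameters, observe that $\Pr(M|g,e_0^*,e_1^*)\geq\Pr(M|f,e_0^*,e_1^*)=\Pr(M|f)$ by term-wise domination, show that the resulting decision rule is a threshold in $j$ precisely because $e_0^*,e_1^*\leq\frac{1}{2}$ (so $g$ is a cut-point mapping and hence in $\mathbf{F}^{dom}$), and finish with $\Pr(M|g)=\Pr(M|g,\text{Params}(g,M))\geq\Pr(M|g,e_0^*,e_1^*)$ by Lemma~\ref{params}. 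This is essentially one coordinate-ascent (EM-style) sweep, and the insight that the fixed-parameter-optimal labeling is automatically dominance-consistent for better-than-random error rates is exactly where reasonableness enters. Your argument buys a cleaner termination story: the paper's iteration must implicitly argue that repeated swaps cannot cycle or create new inconsistencies indefinitely, whereas your construction is one-shot. The paper's approach, on the other hand, generalizes more directly to the rating case, where the analogous global re-labeling would need a separate argument that the per-bucket argmax respects the partial (rather than total) dominance order. Your handling of the degenerate cases ($e_0^*$ or $e_1^*$ equal to $0$, or the logs vanishing at $\frac{1}{2}$) is sketched but plausible; note that the $\frac{1}{2}$ case is actually harmless since the threshold inequality degenerates to a tie resolved uniformly, still yielding a cut-point mapping, and only the $\log 0$ case needs the direct per-item argument you describe.
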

\begin{proof} 
We divide our proof into steps. The first step describes the overall flow of the proof and provides the high level structure for the remaining steps.
Step 1: Suppose $f$ is not a dominance-consistent mapping. Then, either it does not satisfy the bucketizing constraint, or it does not satisfy the dominance-constraint. We claim that if $f$ is reasonable, we can always construct a dominance-consistent mapping, $f^*$ such that $\Pr(M|f^*)\geq \Pr(M|f)$. Then, it trivially follows that $\max_{f^*\in \mathbf{F^{dom}}}\Pr(M|f^*)=\max_{f\in \mathbf{F}}\Pr(M|f)$. We show the construction of such an $f^*$ for every reasonable mapping $f$ in the following steps.

Step 2: (Dominance Inconsistency). Suppose $f$ does not satisfy the dominance constraint. Then, there exists at least one pair of items $I_1,I_2$ such that $M(I_1)>M(I_2)$ and $f(I_1)=0,f(I_2)=1$. Define mapping $f'$ as follows:
\begin{equation*}
f'(I)=
\begin{cases}
  f(I_2)  &  \text{for } I=I_1 \\
  f(I_1)  &  \text{for } I=I_2 \\
  f(I) & \forall I\in \mathbf{I}\setminus\{I_1,I_2\}\\
\end{cases}
\end{equation*}
Mapping $f'$ is identical to $f$ everywhere except for at $I_1$ and $I_2$, where it swaps their respective values. We show that $\Pr(M|f')\geq \Pr(M|f)$. Let $M(I_1)=(m-i,i)$ and $M(I_2)=(m-j,j)$ where $i<j$ and $f(I_1)=0,f(I_2)=1$. Let $n_{k1}$ (respectively $n_{k0}$) denote the number of items, {\em excluding $I_1,I_2$}, with response set $(m-k,k)$ in $M$ such that $f(I)=1$ (respectively 0). We abuse notation slightly to use $n_{k1},n_{k0}$ to also denote the sets of items, {\em excluding $I_1,I_2$}, with response set $(m-k,k)$ and a value of $1,0$ respectively under $f$, wherever the meaning is clear from the context. Given $f,M$, we can calculate the response probability matrix as shown in Section \ref{sec:filter_correspondence}. Let $p_{00}=1-e_0$ be the probability that workers respond 0 to an item with mapping $0$ under $f$. Similarly, $p_{11}=1-e_1$ be the probability that workers respond 1 to an item with mapping $1$ under $f$. Given $f,e_0,e_1$, we have $\Pr(M|f)=\underset{I}\prod \Pr(M(I)|f,e_0,e_1)$.
We can write
\[
p_{11} = \frac{(m-j)+\underset{k}\sum (m-k)n_{k1}}{m+\underset{k}\sum mn_{k1}}
\]
and
\[
p_{00} = \frac{i+\underset{k}\sum (m-k)n_{k0}}{m+\underset{k}\sum mn_{k0}}
\]
We split the likelihood of $f$ into two independent parts. Let the probability contributed by items in $n_{k1}$ be $\Pr_1(M|f)$, and that contributed by $n_{k0}$ be $\Pr_0 (M|f)$, such that $\Pr(M|f)=\Pr_1(M|f)\Pr_0(M|f)$. We claim that for reasonable mappings, $\Pr_1(M|f')\geq \Pr_1(M|f)\land \Pr_0(M|f')\geq \Pr_0(M|f)$. Then, we have $\Pr(M|f')\geq \Pr(M|f)$. We prove below that $\Pr_1(M|f')\geq \Pr_1(M|f)$ The proof for $\Pr_0$ can be derived in a similar fashion.
We have
$$
\Pr_1(M|f)=p_{11}^{(m-j)+\underset{k}\sum (m-k)n_{k1}}(1-p_{11})^{j+\underset{k}\sum kn_{k1}}
$$
We can similarly calculate $\Pr_1(M|f')$, where $p_{11}'=\frac{(m-i)+\underset{k}\sum (m-k)n_{k1}}{m+\underset{k}\sum mn_{k1}}$. Now, let $a=(m-i)+\underset{k}\sum (m-k)n_{k1}$, $b=(m-j)+\underset{k}\sum (m-k)n_{k1}$, and $c=m+\underset{k}\sum mn_{k1}$. We then have $\frac{\Pr_1(M|f')}{\Pr_1(M|f)}=\frac{a^a(c-a)^{c-a}}{b^b(c-b)^{c-b}}$. Note that since $i<j$, we have $a>b$. It can then be shown that for $a+b\geq c$, $\frac{\Pr_1(M|f')}{\Pr_1(M|f)}\geq 1$. Furthermore, for reasonable mappings, we have $p_{11}\geq\frac{1}{2}\Rightarrow a,b\geq \frac{c}{2}\Rightarrow a+b\geq c$. Therefore, $\frac{\Pr_1(M|f')}{\Pr_1(M|f)}\geq 1\Rightarrow \Pr_1(M|f')\geq\Pr_1(M|f)$. Similarly, we can show $\Pr_0(M|f')\geq\Pr_0(M|f)$, and therefore, $\Pr(M|f')\geq\Pr(M|f)$.

Step 3: (Bucketizing Inconsistency). Suppose $f$ does not satisfy the bucketizing constraint. Then, we have at least one pair of items $I_1,I_2$ such that $M(I_1)=M(I_2)$ and $f(I_1)\neq f(I_2)$. Consider the two mappings $f_1$ and $f_2$ defined as follows:
\begin{equation*}
f_1(I)=
\begin{cases}
  f(I_2)  &  \text{for } I=I_1 \\
  f(I) & \forall I\in \mathbf{I}\setminus\{I_1\}\\
\end{cases}
\end{equation*}
\begin{equation*}
f_2(I)=
\begin{cases}
  f(I_1)  &  \text{for } I=I_2 \\
  f(I) & \forall I\in \mathbf{I}\setminus\{I_2\}\\
\end{cases}
\end{equation*}
The mappings $f_1$ and $f_2$ are identical to $f$ everywhere except for at $I_1$ and $I_2$, where $f_1(I_1)=f_1(I_2)=f(I_2)$ and $f_2(I_1)=f_2(I_2)=f(I_1)$. We \techreport{can} show \techreport{(using a similar calculation as in Step 2)} that $\max (\Pr(M|f_1),\Pr(M|f_2))\geq \Pr(M|f)$. Let $f'=\arg\!\max_{f_1,f_2}(\Pr(M|f_1),\Pr(M|f_2))$.

Step 4: (Reducing Inconsistencies). Suppose $f$ is not a dominance-consistent mapping. We have shown that by reducing either a bucketizing inconsistency (Step 2), or a dominance inconsistency (Step 3), we can construct a new mapping, $f'$ with likelihood greater than or equal to that of $f$. Now, if $f'$ is a dominance-consistent mapping, set $f^*=f'$ and we are done. If not, look at an inconsistency in $f'$ and apply steps 2 or 3 to it. \papertext{We can show that with} \techreport{With} each iteration, we are reducing at least one inconsistency while increasing likelihood. We repeat this process iteratively, and since there are only a finite number of inconsistencies in $f$ to begin with, we are guaranteed to end up with a desired dominance-consistent mapping $f^*$ satisfying $\Pr(M|f^*)\geq \Pr(M|f)$. This completes our proof\papertext{ sketch}. \qed
\end{proof}

\subsection{Calculating error rates from mappings}
\label{sec:filter_correspondence}

In this section, we formalize the correspondence between mappings and worker error rates that we introduced in Section \ref{formal}. Given a response set $M$ and a mapping $f$, say we calculate the corresponding worker error rates $e_0(f,M),e_1(f,M)$ $=$ $\text{Params}(f,M)$
as follows:
\begin{enumerate}
\item Let $\mathbf{I_j}\subseteq\mathbf{I}\ni f(I)=0,M(I)=(m-j,j)\forall I\in\mathbf{I_j}$. Then, $e_0=\frac{\sum_{j=0}^m (m-j)|I_j|}{m\sum_{j=0}^m |I_j|}$
\item Let $\mathbf{I_j}\subseteq\mathbf{I}\ni f(I)=1,M(I)=(m-j,j)\forall I\in\mathbf{I_j}$. Then, $e_1=\frac{\sum_{j=0}^mj|I_j|}{m\sum_{j=0}^m |I_j|}$
\end{enumerate}
Intuitively, $e_0(f,M)$ (respectively $e_1(f,M)$) is just the fraction of times a worker responds with a value of 1 (respectively 0) for an item whose true value is 0 (respectively 1), under response set $M$ assuming that $f$ is the ground truth mapping.
We show that for each mapping $f$, this intuitive set of false positive and false negative error rates, $(e_0,e_1)$ maximizes $\Pr(M|f,e_0,e_1)$. We express this idea formally below.

\begin{lemma}[$\text{Params}(f,M)$]
\label{params}
Given response set $M$. Let $\Pr(e_0,e_1|f,M)$ be the probability that the underlying worker false positive and negative rates are $e_0$ and $e_1$ respectively, conditioned on mapping $f$ being the true mapping. Then, \[\forall f, \text{Params}(f,M)=\underset{e_0,e_1}{\arg\!\max} \Pr(e_0,e_1|f,M)\]
\end{lemma}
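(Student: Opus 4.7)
The plan is to reduce the posterior maximization to a straightforward maximum-likelihood calculation and then to show that the MLE for each error rate is exactly the empirical frequency prescribed by $\text{Params}(f,M)$.

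First, I would invoke Bayes' rule:
\[
\Pr(e_0,e_1 \mid f, M) \;=\; \frac{\Pr(M \mid f, e_0, e_1)\,\Pr(e_0,e_1 \mid f)}{\Pr(M \mid f)}.
\]
Assuming a uniform prior on $(e_0,e_1) \in [0,1]^2$ (which is standard here, and the paper treats $p$ as an unknown parameter rather than a random variable with an informative prior), the denominator and the prior factor are constants in $(e_0,e_1)$. Therefore maximizing $\Pr(e_0,e_1 \mid f,M)$ is equivalent to maximizing $\Pr(M \mid f, e_0, e_1)$, so it suffices to show the claimed values are the MLE.

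Next, using worker independence across items and within an item, I would decompose
\[
\Pr(M \mid f, e_0, e_1) \;=\; \prod_{I:\, f(I)=0} e_0^{\,m - j_I}(1-e_0)^{\,j_I} \;\cdot\; \prod_{I:\, f(I)=1} (1-e_1)^{\,m - j_I}\, e_1^{\,j_I},
\]
where $M(I) = (m-j_I, j_I)$, as given in Section~\ref{formal}. This factorization shows that the likelihood separates into an $e_0$-only factor and an $e_1$-only factor, so the joint maximum is attained by maximizing each factor independently.

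Then I would aggregate the exponents. Using the notation of the lemma, let $\mathbf{I}_j = \{I : f(I)=0,\, M(I)=(m-j,j)\}$ for the $e_0$ factor. The $e_0$ factor becomes
\[
e_0^{\,\sum_j (m-j)|\mathbf{I}_j|}\,(1-e_0)^{\,\sum_j j\,|\mathbf{I}_j|},
\]
which has the form $e_0^{A}(1-e_0)^{B}$ for non-negative integer counts $A,B$. Taking the log-derivative (or a one-line convexity argument) shows the unique interior maximizer is $e_0^* = A/(A+B)$, provided $A+B>0$. Substituting $A+B = m\sum_j |\mathbf{I}_j|$ reproduces the formula $e_0 = \sum_j(m-j)|\mathbf{I}_j|/(m\sum_j|\mathbf{I}_j|)$ from the definition of $\text{Params}(f,M)$. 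An identical argument on the $e_1$ factor (with $\mathbf{I}_j$ redefined as in item~2 of the definition) gives $e_1^* = \sum_j j|\mathbf{I}_j|/(m\sum_j |\mathbf{I}_j|)$, matching the lemma.

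Honestly, there is no serious obstacle: the entire argument is standard Bernoulli MLE after the key observation that the joint likelihood factors cleanly under a fixed mapping $f$. The only care-point is the edge case where the denominator $m\sum_j|\mathbf{I}_j|$ vanishes — i.e., no item is mapped to $0$ (or to $1$) — in which case the corresponding factor is identically $1$ and any value of $e_0$ (resp.\ $e_1$) is a maximizer; by convention, $\text{Params}(f,M)$ can pick any value, and the lemma still holds vacuously. I would mention this briefly at the end for completeness.
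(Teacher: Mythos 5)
Your proposal is correct and follows essentially the same route as the paper's proof: Bayes' rule to reduce the posterior maximization to maximum likelihood, factorization of the likelihood into separate $e_0$ and $e_1$ Bernoulli factors, and the standard first-order (log-derivative) computation yielding the empirical-frequency estimates. Your explicit treatment of the degenerate case where no item is mapped to $0$ (or $1$) is a small completeness bonus the paper omits, but the argument is otherwise the same.
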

\begin{proof}
Let $M,f$ be given. By Bayes theorem, \\$\Pr(e_0,e_1|f,M)=k\Pr(M|e_0,e_1,f)$ for some constant $k$. Therefore, $\underset{e_0,e_1}{\arg\!\max} \Pr(e_0,e_1|f,M)=\underset{e_0,e_1}{\arg\!\max} \Pr(M|e_0,e_1,f)$.
Now, $\Pr(M|e_0,e_1,f)=\underset{I\in\mathbf{I}}\prod \Pr(M(I)|f,e_0,e_1)$.

Let $\mathbf{I_{j,0}}\subseteq\mathbf{I}\ni f(I)=0,M(I)=(m-j,j)\forall I\in\mathbf{I_j}$ and $\mathbf{I_{j,1}}\subseteq\mathbf{I}\ni f(I)=1,M(I)=(m-j,j)\forall I\in\mathbf{I_j}$. We have
\[
\Pr(M(I)|f,e_0,e_1)=
\begin{cases}
(1-e_1)^{m-j}e_1^j \forall I\in\mathbf{I_{j,1}}\\
e_0^{m-j}(1-e_0)^j \forall I\in\mathbf{I_{j,0}}\\
\end{cases}
\]
Therefore, $\Pr(M|f,e_0,e_1)=\underset{j}\prod[(1-e_1)^{m-j}e_1^j]^{|\mathbf{I_{j,1}}|}[e_0^{m-j}(1-e_0)^j]^{|\mathbf{I_{j,0}}|}$. For ease of notation, let 
$a_1=\overset{m}{\underset{j=0}\sum }j|I_{j,1}|$, $b_1=\overset{m}{\underset{j=0}\sum }(m-j)|I_{j,1}|$, $a_0=\overset{m}{\underset{j=0}\sum }j|I_{j,0}|$, and $b_0=\overset{m}{\underset{j=0}\sum }(m-j)|I_{j,0}|$. Then, we have $\Pr(M|f,e_0,e_1)=(1-e_1)^{b_1}e_1^{a_1}(1-e_0)^{a_0}e_0^{b_0}$.

To maximize $\Pr(M|e_0,e_1,f)$ given $M,f$, we compute its partial derivates with respect to $e_0$ and $e_1$ and set them to 0. We have, $\frac{{\partial \Pr(M|e_0,e_1,f)}}{\partial e_1}=0\Rightarrow a_1(1-e_1)-b_1e_1=0$ (simplifying common terms). Therefore, $e_1=\frac{a_1}{a_1+b_1}=\frac{\sum_{j=0}^mj|I_j|}{m\sum_{j=0}^m |I_j|}$.  It is easy to verify that the second derivative $\frac{\partial^2 \Pr(M|e_0,e_1,f)}{\partial e_1^2}$ is negative for this value of $e_1$. We recall that this is the value of $e_1$ under $\text{Params}(f,M)$. Similarly, we can also show that $\frac{{\partial \Pr(M|e_0,e_1,f)}}{\partial e_0}=0$ forces $e_0=\frac{\sum_{j=0}^m (m-j)|I_j|}{m\sum_{j=0}^m |I_j|}$, which is the value given by $\text{Params}(f,M)$.

Therefore, $\text{Params}(f,M)=\underset{e_0,e_1}{\arg\!\max} \Pr(e_0,e_1|f,M)$.
\qed \end{proof}

\noindent Next, we show that instead of simultaneously trying to find the most likely mapping and false positive and false negative error rates, it is sufficient to just find the most likely mapping while assuming that the error rates corresponding to any chosen mapping, $f$, are always given by $\text{Params}(f,M)$. We formalize this intuition in Lemma \ref{lemma:likelihood} below. \papertext{The details of its proof, which follows from Lemma \ref{params}, can also be found in the full technical report \cite{optTR}.}

\begin{lemma}[Likelihood of a Mapping]
\label{lemma:likelihood}
Let $f\in\mathbf{F}$ be any mapping and $M$ be the given response set on $\mathbf{I}$. We have, \[\underset{f,e_0,e_1}{\max} \Pr(M|f,e_0,e_1)=\underset{f}{\max} \Pr(M|f,\text{Params}(f,M))\] 
\end{lemma}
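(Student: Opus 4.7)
The plan is to derive this lemma as an essentially direct corollary of Lemma~\ref{params}. The only bridge we need is that the Bayes-rule identity already used in Lemma~\ref{params} lets us pass between $\arg\!\max_{e_0,e_1}\Pr(e_0,e_1\mid f,M)$ and $\arg\!\max_{e_0,e_1}\Pr(M\mid f,e_0,e_1)$, after which the lemma reduces to swapping the order of maxima.

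First I would rewrite the left-hand side by splitting the joint maximum:
\[
\max_{f,e_0,e_1}\Pr(M\mid f,e_0,e_1) \;=\; \max_{f}\Bigl(\max_{e_0,e_1}\Pr(M\mid f,e_0,e_1)\Bigr).
\]
Next I would fix an arbitrary $f$ and argue that the inner maximum is attained at $(e_0,e_1)=\text{Params}(f,M)$. This is the content of Lemma~\ref{params}: its proof shows (via Bayes' rule with the $f,M$-dependent constant $k$) that $\arg\!\max_{e_0,e_1}\Pr(e_0,e_1\mid f,M)=\arg\!\max_{e_0,e_1}\Pr(M\mid f,e_0,e_1)$, and then explicitly computes this common argmax to be $\text{Params}(f,M)$ by differentiating the closed-form expression $(1-e_1)^{b_1}e_1^{a_1}(1-e_0)^{a_0}e_0^{b_0}$ and checking the second-order condition. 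Hence for every $f$,
\[
\max_{e_0,e_1}\Pr(M\mid f,e_0,e_1) \;=\; \Pr\!\bigl(M\mid f,\text{Params}(f,M)\bigr).
\]

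Substituting this back gives
\[
\max_{f,e_0,e_1}\Pr(M\mid f,e_0,e_1) \;=\; \max_{f}\Pr\!\bigl(M\mid f,\text{Params}(f,M)\bigr),
\]
which is exactly the claimed equality.

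There is no real obstacle here beyond being careful about the scope of the maximum: Lemma~\ref{params} is stated pointwise in $f$, so the main thing to emphasize is that the argmax $\text{Params}(f,M)$ genuinely depends on $f$ and must be taken inside the outer $\max_f$. Once that scoping is made explicit, the lemma follows in three lines from Lemma~\ref{params}, with no additional calculation required.
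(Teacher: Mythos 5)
Your proof is correct and is essentially the paper's own argument: both reduce the lemma to Lemma~\ref{params} by observing that for each fixed $f$ the inner maximization over $(e_0,e_1)$ is attained at $\text{Params}(f,M)$. The paper phrases this as a pair of inequalities starting from the joint optimizer $(f^*,e_0^*,e_1^*)$, while you collapse the iterated maximum directly, but the mathematical content is identical.
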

\begin{proof}
The proof for this statement follows easily from Lemma \ref{params}. Let $f^*,e_0^*,e_1^*=\underset{f,e_0,e_1}{\max} \Pr(M|f,e_0,e_1)$. Now, let $e_0',e_1'=\text{Params}(f^*,M)$. From Lemma \ref{params}, we have $e_0',e_1'=\underset{e_0,e_1}{\arg\!\max} \Pr(M|f^*,e_0,e_1)$. So we have, $\Pr(M|f^*,\text{Params}(f^*,M))\geq \Pr(M|f^*,e_0^*,e_1^*)$. Additionally, $\underset{f}{\max} \Pr(M|f,\text{Params}(f,M))\geq \Pr(M|f^*,\text{Params}(f^*,M))$. Therefore, \[\underset{f}{\max} \Pr(M|f,\text{Params}(f,M))\geq \underset{f,e_0,e_1}{\max} \Pr(M|f,e_0,e_1)\]

But $\underset{f,e_0,e_1}{\max} \Pr(M|f,e_0,e_1)\geq \underset{f}{\max} \Pr(M|f,\text{Params}(f,M))$ by definition.

Therefore, combining the above inequalities, we have \[\underset{f,e_0,e_1}{\max} \Pr(M|f,e_0,e_1)=\underset{f}{\max} \Pr(M|f,\text{Params}(f,M))\qed\]
\end{proof}

\subsection{Experiments}
\label{sec:filter_exp}
The goal of our experiments is two-fold. First, we wish to verify that our algorithm does indeed find higher likelihood mappings. Second, we wish to compare our algorithm against standard baselines for such problems, like the EM algorithm, for different metrics of interest. While our algorithm optimizes for likelihood of mappings, we are also interested in other metrics that measure the quality of our predicted item assignments and worker response probability matrix. For instance, we test what fraction of item values are predicted correctly by different algorithms to measure the quality of item value prediction. We also compare the similarity of predicted worker response probability matrices with the actual underlying matrices using distance measure like Earth-Movers Distance and Jensen-Shannon Distance. We run experiments on both simulated as well as real data and discuss our findings below.

\subsubsection{Simulated Data}
\label{sec:filter-exp-simulated}

\smallskip
\noindent {\bf Dataset generation.}
For our synthetic experiments, we assign ground truth 0-1 values to $n$ items randomly based on a fixed selectivity. Here, a selectivity of $s$ means that each item has a probability of $s$ of being assigned true value $1$ and $1-s$ of being assigned true value $0$. This represents our set of items $\mathbf{I}$ and their ground truth mapping, $T$. We generate a random ``true'' or underlying worker response probability matrix, with the only constraint being that workers are better than random (false positive and false negative rates $\leq 0.5$). We simulate the process of workers responding to items by drawing their response from their true response probability matrix, $p_{\text{true}}$. This generates one instance of the response set, $M$. Different algorithms being compared now take $\mathbf{I},M$ as input and return a mapping $f:\mathbf{I}\rightarrow \{0,1\}$, and a worker response matrix $p$.

\smallskip
\noindent {\bf Parameters varied.}
We experiment with different choices over both input parameters and comparison metrics over the output.  For input parameters, we vary the number of items $n$, the selectivity (which controls the ground truth) $s$, and the number of worker responses per item, $m$. While we try out several different combinations, we only show a small representative set of results below. In particular, we observe that changing the value of $n$ does not significantly affect our results. We also note that the selectivity can be broadly classified into two categories: evenly distributed ($s=0.5$), and skewed ($s>0.5$ or $s<0.5$). In each of the following plots we use a set of $n=1000$ items and show results for either $s=0.5$ or $s=0.7$. We show only one plot if the result is similar to and representative of other input parameters. More \papertext{extensive }experimental results can be found in \papertext{our full technical report \cite{optTR}.} \techreport{the appendix, Section \ref{sec:appendix-filter-exp}.} 

\smallskip
\noindent {\bf Metrics.} We test the output of different algorithms on a few different metrics: we compare the likelihoods of their output mappings, we compare the fraction of items whose values are predicted incorrectly, and we compare the quality of predicted worker response probability matrix. For this last metric, we use different distance functions to measure how close the predicted worker matrix is to the underlying one used to generate the data. In this paper we report our distance measures using an Earth-Movers Distance (EMD) based score~\cite{emd}. For a full description of our EMD based score and other distance metrics used, we refer to \papertext{our full technical report \cite{optTR}.} \techreport{the appendix, Section \ref{sec:appendix-filter-exp}.} 


\smallskip
\noindent {\bf Algorithms.} We compare our algorithm, denoted $OPT$, against the standard Expectation-Maximization (EM) algorithm that is also solving the same underlying maximum likelihood problem. The EM algorithm starts with an arbitrary initial guess for the worker response matrix, $p_1$ and computes the most likely mapping $f_1$ corresponding to it. The algorithm then in turn computes the most likely mapping $p_2$ corresponding to $f_1$ (which is not necessarily $p_1$) and repeats this process iteratively until convergence. We experiment with different initializations for the EM algorithm, represented by $EM(1),EM(2),EM(3)$.  $EM(1)$ represents the starting point with false positive and negative rates $e_0,e_1=0.25$ (workers are better than random), $EM(2)$ represents the starting point of $e_0,e_1=0.5$ (workers are random), and $EM(3)$ represents the starting point of $e_0,e_1=0.75$ (workers are worse than random). $EM(*)$ is the consolidated algorithm which runs each of the three EM instances and picks the maximum likelihood solution across them for the given $\mathbf{I},M$.

\smallskip
\noindent {\bf Setup.}
We vary the number of worker responses per item along the x-axis and plot different objective metrics (likelihood, fraction of incorrect item value predictions, accuracy of predicted worker response matrix) along the y-axis. Each data point represents the value of the objective metric averaged across 1000 random trials. That is, for each fixed value of $m$, we generate 1000 different worker response matrices, and correspondingly 1000 different response sets $M$. We run each of the algorithms over all these datasets, measure the value of their objective function and average across all problem instances to generate one point on the plot.

\smallskip
\noindent {\bf Likelihood.}
Figure \ref{fig:filter_like} shows the likelihoods of mappings returned by our algorithm $OPT$ and the different instances of the EM algorithm. In this experiment, we use $s=0.5$, that is items' true values are roughly evenly distributed over $\{0,1\}$. Note that the y-axis plots the likelihood on a log scale, and that a higher value is more desirable. We observe that our algorithm does indeed return higher likelihood mappings with the marginal improvement going down as $m$ increases. However, in practice, it is unlikely that we will ever use $m$ greater than 5 (5 answers per item). While our gains for the simple filtering setting are small, as we will see in Section~\ref{sec:rating-exp}, the gains are significantly higher for the case of rating, where multiple error rate parameters are being simultaneously estimated. (For the rating case, only false positive and false negative error rates are being estimated.)

\begin{figure*}[!t]
\vspace{-10pt}
\subfigure{
\includegraphics[scale=0.3]{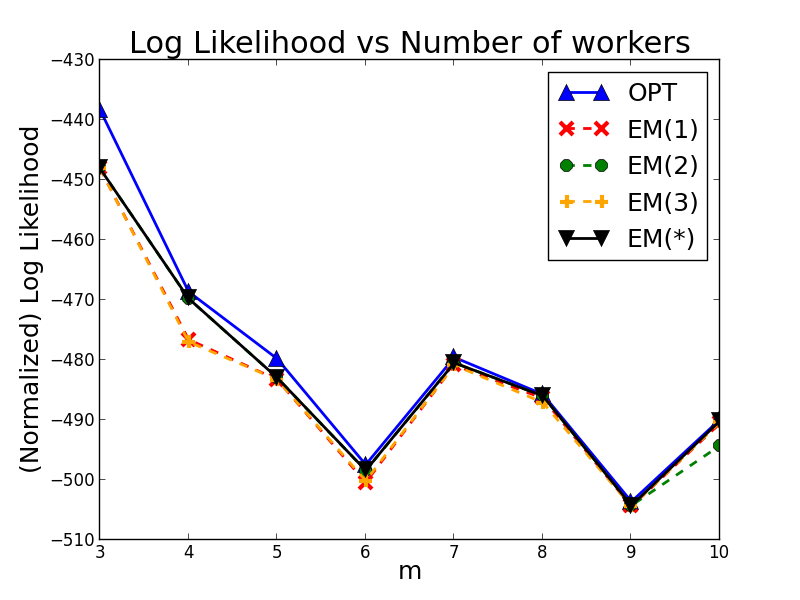}
\label{fig:filter_like}
}
\hspace{-15pt}
\subfigure{
\includegraphics[scale=0.3]{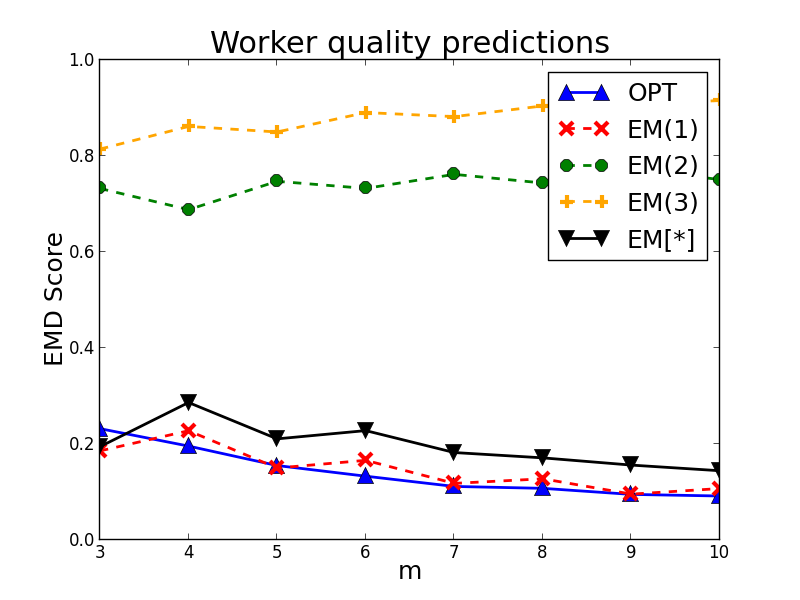}
\label{fig:filter_frac1}
}
\hspace{-15pt}
\subfigure{
\includegraphics[scale=0.3]{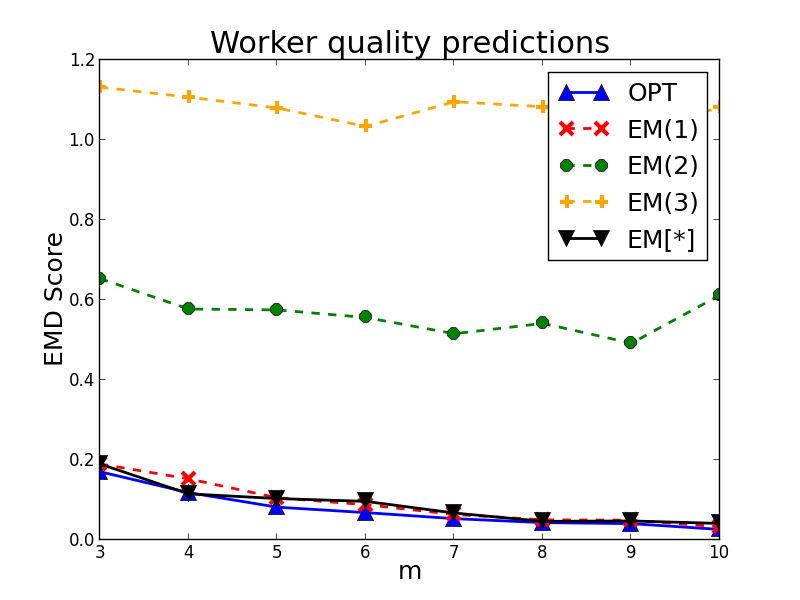}
\label{fig:filter_emd1}
}
\vspace{-20pt}
\caption{Synthetic Data Experiments: (a)Likelihood, $s=0.5$ (b) Fraction Incorrect, $s=0.7$ (c) EMD Score, $s=0.5$}
\vspace{-15pt}
\end{figure*}


\smallskip
\noindent {\bf Fraction incorrect.}
 In Figure \ref{fig:filter_frac1} ($s=0.7$), we plot the fraction of item values each of the algorithms predicts \emph{incorrectly} and average this measure over the 1000 random instances. A lower score means a more accurate prediction. We observe that our algorithm estimates the true values of items with a higher accuracy than the EM instances.


\smallskip
\noindent {\bf EMD score.}
To compare the qualities of our predicted worker false positive and false negative error rates, we compute and plot EMD-based scores in Figure \ref{fig:filter_emd1} ($s=0.5$) and Figure \ref{fig:filter_emd2} ($s=0.7$). Note that since EMD is a distance metric, a lower score means that the predicted worker response matrices are closer to the actual ones; so, algorithms that are lower on this plot do better. We observe that the worker response probability matrix predicted by our algorithm is closer to the actual probability matrix used to generate the data than all the EM instances. While $EM(1)$ in particular does well for this experiment, we  observe that $EM(2)$ and $EM(3)$ get stuck in bad local maxima making $EM(*)$ prone to the initialization.




 Although averaged across a large number of instances, $EM(1)$ and $EM(*)$ do perform well, our experiments show that optimizing for likelihood does not adversely affect other potential parameters of interest. For all metrics considered, $OPT$ performs better, in addition to giving us a global maximum likelihood guarantee. (As we will see in the rating section, our results are even better there since multiple parameters are being estimated.) We experiment with a number of different parameter settings and comparison metrics and present more extensive results in \papertext{our full technical report \cite{optTR}.} \techreport{the appendix.}

\subsubsection{Real Data}
\label{sec:filter-exp-real}
\smallskip
\noindent {\bf Dataset.}
In this experiment, we use an image comparison dataset~\cite{IC} where 19 workers are asked to each evaluate 48 tasks. Each task consists of displaying a pair of sporting images to a worker and asking them to evaluate if both images show the same sportsperson. We have the ground truth yes/no answers for each pair, but do not know the worker error rates. Note that for this real dataset, our assumptions that all workers have the same error rates and answer questions independently may not necessarily hold true. We show that in spite of the assumptions made by our algorithm, they estimate the values of items with a high degree of accuracy even on this real dataset.

To evaluate the performance of our algorithm and the EM-based baseline, we compare the the estimates for the item values against the given ground truth. Note that since we do not have a ground truth for worker error rates under this setting, we cannot evaluate the algorithms for that aspect---we do however study likelihood of the final solutions
from different algorithms.

\smallskip
\noindent {\bf Setup.}
We vary the number of workers used from 1 to 19 and plot the performance of algorithms $OPT$, $EM(1)$, $EM(2)$, $EM(3)$, $EM(*)$ similar to Section \ref{sec:filter-exp-simulated}. We plot the number of worker responses used along the x-axis. For instance, a value of $m=4$ indicates that for each item, four {\em random} worker responses are chosen. The four workers answering one item may be different from those answering another item. This random sample of the response set is given as input to the different algorithms. Similar to our simulations, we average our results across 100 different trials for each data point in our subsequent plots. For each fixed value of $m$, one trial corresponds to choosing a set of $m$ worker responses to each item randomly. We run 100 trials for each $m$, and correspondingly generate 100 different response sets $M$. We run our $OPT$ and EM algorithms over all these datasets, measure the value of different objectives function and average across all problem instances to generate one point on a plot.

\smallskip
\noindent {\bf Likelihood.}
Figure \ref{fig:ic_like_random} plots the likelihoods of the final solution for different algorithms. We observe that except for $EM(2)$, all algorithms have a high likelihood. This can be explained as follows: $EM(2)$ which starts with an initialization of $e_0$ and $e_1$ rates around 0.5 and converges to a final response probability matrix in that neighborhood. Final error rates of around 0.5 (random) will have naturally low likelihood when there is a high amount of agreement between workers. $EM(1)$ and $EM(3)$ on the other hand start with, and converge to near opposite extremes with $EM(1)$ predicting $e_0 / e_1$ rates $\approx 0$ and $EM(3)$ predicting error rates $\approx 1$. Both of these, however, result in a high likelihood of observing the given response, with $EM(1)$ predicting that the worker is always correct, and $EM(3)$ predicting that the worker is always incorrect, i.e., adversarial.  Even though $EM(1)$ and $EM(3)$ often converge to completely opposite predictions of item-values because of their initializations, their solutions still have similar likelihoods corresponding to the intuitive extremes of perfect and adversarial worker behavior. This behavior thus demonstrates the strong dependence of $EM$-based approaches on the initialization parameters.

\begin{figure*}[!t]
\vspace{-15pt}
\subfigure{
\includegraphics[scale=0.3]{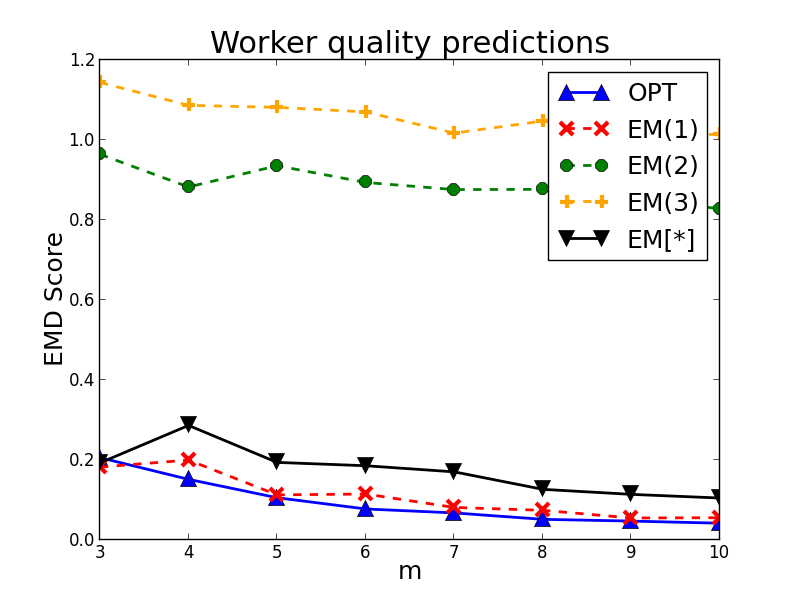}
\label{fig:filter_emd2}
}
\hspace{-15pt}
\subfigure{
\includegraphics[scale=0.3]{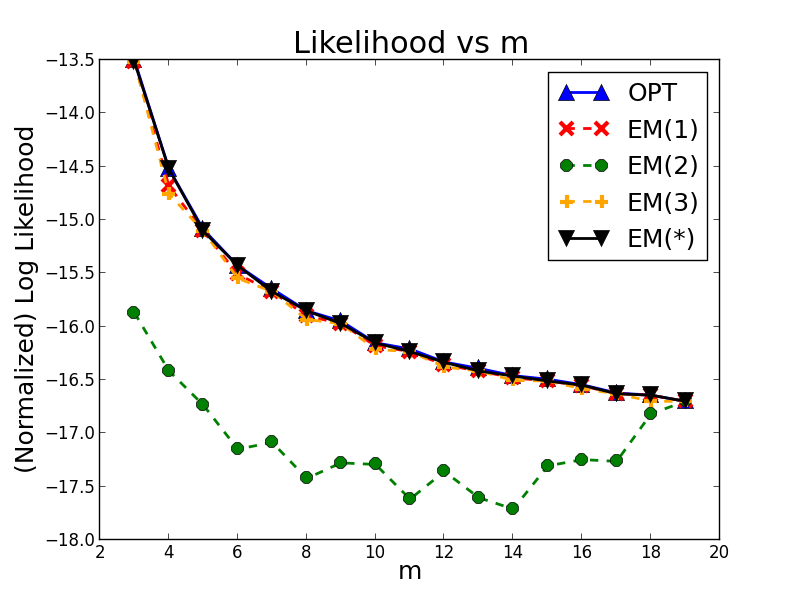}
\label{fig:ic_like_random}
}
\hspace{-15pt}
\subfigure{
\includegraphics[scale=0.3]{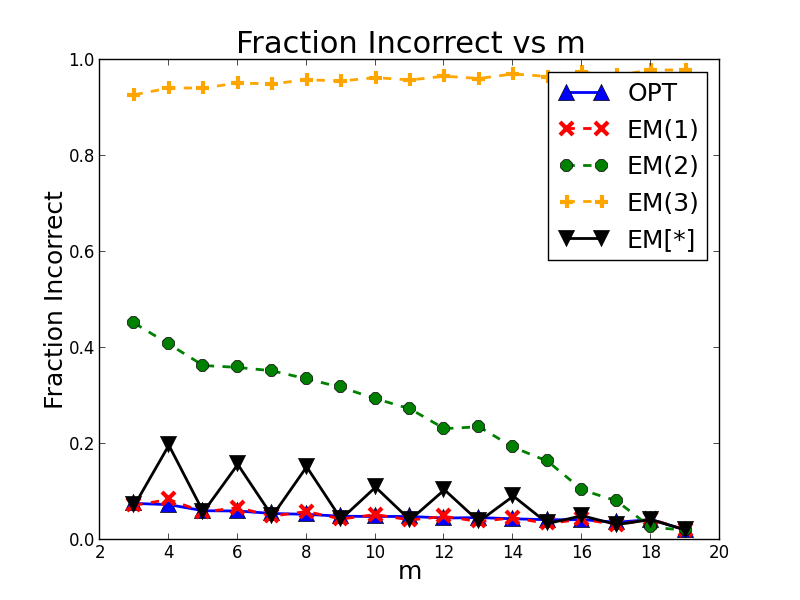}
\label{fig:ic_frac_random}
}
\vspace{-20pt}
\caption{Synthetic Data Experiments: (a) EMD Score, $s=0.7$  Real Data Experiments: (b) Likelihood (c) Fraction Incorrect}
\vspace{-15pt}
\end{figure*}

\smallskip
\noindent {\bf Fraction Incorrect.}
Figure \ref{fig:ic_frac_random} plots the fraction of items predicted incorrectly along the y-axis for $OPT$ and the EM algorithms. Correspondingly, their predictions for item values are opposite, as can be seen in Figure \ref{fig:ic_frac_random}.

We observe that both $EM(1)$ and our algorithm $OPT$ do fairly well on this dataset 
even when a very few number of worker responses are used.
However, $EM(*)$, which one may expect would typically do better than the individual $EM$ initializations,
sometimes does poorly compared to $OPT$ by picking solutions of high likelihood that are nevertheless not very good.
Note that here we assume that worker identities are unknown and arbitrary workers could be answering different tasks --- our goal is to characterize the behavior of the worker population as a whole. For larger datasets, we expect the effects of population smoothing to be greater and our assumptions on worker homogeneity to be closer to the truth. So, even though our algorithm provides theoretical global guarantees under somewhat strong assumptions, it also performs well for settings where our assumptions may not necessarily be true.

\section{Rating Problem}
In this section, we extend our techniques from filtering to the problem of rating items. Even though the main change resides in the possible values of items ($\{0,1\}$ for filtering and $\{1,\ldots,R\}$ for rating), this small change adds  significant complexity to our dominance idea. We show how our notions of bucketizing and dominance generalize from the filtering case.

\subsection{Formalization}
Recall from Section \ref{prelims} that, for the rating problem, workers are shown an item and asked to provide a score from $1$ to $R$, with $R$ being the best (highest) and $1$ being the worst (lowest) score possible. Each item from the set $\mathbf{I}$ receives $m$ worker responses and all the responses are recorded in $M$. We write $M(I)=(v_R,v_{R-1},\ldots,v_1)$ if item $I\in\mathbf{I}$ receives $v_i$ responses of ``$i$'', $1\leq i\leq R$. Recall that $\overset{R}{\underset{i=1}\sum v_i}=m$. Mappings are functions $f:\mathbf{I}\rightarrow\{1,2,\ldots,R\}$ and workers are described by the response probability matrix $p$, where $p(i,j)$ ($i,j\in \{1,2,\ldots,R\}$) denotes the probability that a worker will give an item with true value $j$ a score of $i$. Our problem is defined as that of finding $f^*,p^*=\underset{f,p}{\arg\!\max} \Pr(M|f,p)$ given $M$.

As in the case of filtering, we use the relation between $p$ and $f$ through $M$ to define the likelihood of a mapping. We observe that for maximum likelihood solutions given $M$, fixing a mapping $f$ automatically fixes an optimal $p=\text{Params}(f,M)$. Thus, as before, we focus our attention on the mappings, implicitly finding the maimum likelihood $p$ as well. The following lemma and its proof sketch capture this idea.
\begin{lemma}[Likelihood of a mapping]
We have
\begin{align*}
\underset{f,p}\max \Pr(M|f,p)=\underset{f}\max \Pr(M|f,\text{Params}(f,M)) \\
\text{where \ \ } \exists\text{Params}(f,M)=\underset{p}{\arg\!\max} \Pr(p|f,M)
\end{align*}
\end{lemma}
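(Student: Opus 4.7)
The plan is to mirror the proof structure of Lemma 3.2 from the filtering section, replacing the two-parameter maximization over $(e_0,e_1)$ with a simultaneous maximization over the $R^2$ entries of the response matrix $p$, subject to the constraint that each column of $p$ is a probability distribution (sums to $1$). The high-level skeleton has two stages: first, generalize the $\text{Params}$ lemma to show that for any fixed mapping $f$, $\text{Params}(f,M) = \arg\!\max_p \Pr(M|f,p)$; second, apply a two-sided sandwich argument to combine this with the definition of $\max_{f,p}\Pr(M|f,p)$.

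For the first stage, I would start by writing $\Pr(M|f,p) = \prod_{I \in \mathbf{I}} \Pr(M(I)\mid f,p)$, which by independence of responses equals $\prod_{I} \prod_{i=1}^R p(i, f(I))^{v_i(I)}$ where $M(I) = (v_R,\ldots,v_1)$. Grouping items by the value of $f(I)$, this rearranges into $\prod_{j=1}^R \prod_{i=1}^R p(i,j)^{n_{i,j}}$, where $n_{i,j} = \sum_{I:\, f(I)=j} v_i(I)$ is the total number of ``$i$'' responses received by items mapped to true rating $j$. Because the entries $p(\cdot, j)$ of distinct columns live in disjoint simplices, the maximization decouples into $R$ independent problems, each a textbook multinomial MLE: maximize $\prod_i p(i,j)^{n_{i,j}}$ subject to $\sum_i p(i,j) = 1$ and $p(i,j) \ge 0$. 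A Lagrange multiplier (or log-concavity) argument gives the unique maximizer $p^*(i,j) = n_{i,j}/\sum_{k} n_{k,j}$, which is exactly the intuitive $\text{Params}(f,M)$: the empirical fraction of rating-$i$ responses among items whose true rating under $f$ is $j$. (The edge case where no item is mapped to some $j$ leaves that column unconstrained by the data; any distribution there works, and we can fix an arbitrary convention.)

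Once this generalized $\text{Params}$ lemma is established, the main identity follows by the same two-sided argument as in the filtering case. One direction is immediate: since $\text{Params}(f,M)$ is a particular feasible choice of $p$, $\max_{f}\Pr(M|f,\text{Params}(f,M)) \le \max_{f,p} \Pr(M|f,p)$. For the other direction, let $(f^*, p^*)$ attain the joint maximum; by the $\text{Params}$ lemma applied at $f^*$, $\Pr(M|f^*,\text{Params}(f^*,M)) \ge \Pr(M|f^*, p^*) = \max_{f,p}\Pr(M|f,p)$, so the left-hand side dominates. Equality follows.

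The main obstacle, relative to the filtering proof, is purely the constrained optimization on the simplex: in the filtering case each column had only one free parameter ($e_0$ or $e_1$), and setting the derivative to zero gave the closed-form answer directly, whereas here each column carries $R-1$ free parameters tied by a linear constraint. The Lagrangian calculation is routine and standard, but it is the only conceptually new ingredient, and it is worth isolating as a short sub-lemma so that the sandwich step in the final paragraph reads exactly as in Lemma 3.2.
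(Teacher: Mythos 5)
Your proposal is correct and follows essentially the same route as the paper: the paper's own proof defines $\text{Params}(f,M)(i,j)$ as the empirical fraction of ``$i$'' responses among items mapped to $j$, asserts that the optimality of this choice follows ``similar to Lemma 3.1'' (the filtering $\text{Params}$ lemma, where each column's single free parameter is found by setting a derivative to zero), and then concludes via the same two-sided sandwich argument you give. Your contribution is simply to fill in the step the paper leaves implicit --- the column-wise decoupling into independent multinomial MLE problems solved by Lagrange multipliers on the simplex --- which is exactly the right generalization.
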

\begin{proof}
Given mapping $f$ and evidence $M$, we can calculate the worker response probability matrix $p=\text{Params}(f,M)$ as follows. Let the $i^{th}$ dimension of the response set of any item $I$ by $M_i(I)$. That is, if $M(I)=(v_R,\ldots,v_1)$, then $M_i(I)=v_i$. Let $\mathbf{I_i}\subseteq\mathbf{I}\ni f(I)=i\forall i,I\in\mathbf{I_i}$. Then, $p(i,j)=\frac{\sum_{I\in \mathbf{I_j}}M_{i}(I)}{m|I_j|}\forall i,j$. Intuitively, $\text{Params}(f,M)(i,j)$ is just the fraction of times a worker responded $i$ to an item that is mapped by $f$ to a value of $j$. Similar to Lemma~\ref{params}, we can show that $\text{Params}(f,M) = \underset{p}{\arg\!\max} $ \\ $\Pr(p|f,M)$.
Consequently, it follows that $$\underset{f,p}\max \Pr(M|f,p)=\underset{f}\max \Pr(M|f,\text{Params}(f,M)) \qed$$
\end{proof}

Denoting the likelihood of a mapping, $\Pr(M|f,\text{Params}(f,M))$, as $\Pr(M|f)$, our maximum likelihood rating problem is now equivalent to that of finding the most likely mapping. Thus, we wish to solve for $\underset{f}{\arg\!\max} \Pr(M|f)$.

\subsection{Algorithm}
\label{sec:rating-algo}
Now, we generalize our idea of bucketized, dominance-consistent mappings from Section \ref{algo} to find a maximum likelihood solution for the rating problem. \techreport{Although we primarily present the intuition below, we formalize our dominance relation and consistent-mappings in Section \ref{sec:rating-dom_prop} and further prove some interesting properties.}

\smallskip
\noindent {\bf Bucketizing.}
For every item, we are given $m$ worker responses, each in $1,2,\ldots,R$. It can be shown that there are $R+m-1 \choose R-1$ different possible worker response sets, or buckets. The bucketizing idea is the same as before: items with the same response sets can be treated identically and should be mapped to the same values. So we only consider mappings that give the same rating score to all items in a common response set bucket.

\smallskip
\noindent {\bf Dominance Ordering.}
Next we generalize our dominance constraint. Recall that for filtering with $m$ responses per item, we had a total ordering on the dominance relation over response set buckets, $(m,0)>(m-1,1)>\ldots>(1,m-1)>(0,m)$ where no dominated bucket could have a higher score (``1'') than a dominating bucket (``0''). Let us consider the simple example where $R=3$ and we have $m=3$ worker responses per item. 
Let $(i,j,k)$ denote the response set where $i$ workers give a score of $``3''$, $j$ workers give a score of $``2''$ and $k$ workers give a score of $``1''$. Since we have $3$ responses per item, $i+j+k=3$. Intuitively, the response set $(3,0,0)$ dominates the response set $(2,1,0)$ because in the first, three workers gave items a score of ``3'', while in the second, only two workers give a score of ``3'' while one gives a score of ``2''. Assuming a ``reasonable'' worker behavior, we would expect the value assigned to the dominating bucket to be at least as high as the value assigned to the dominated bucket.
Now consider the buckets $(2,0,1)$ and $(1,2,0)$. For items in the first bucket, two workers have given a score of ``3'', while one worker has given a score of ``1''. For items in the second bucket, one worker has given a score of ``3'', while two workers have given a score of ``2''. Based solely on these scores, we cannot claim that either of these buckets dominates the other. So, for the rating problem we only have a \emph{partial} dominance ordering, which we can represent as a DAG. We show the \emph{dominance-DAG} for the $R=3,m=3$ case in Figure \ref{fig:dag-3,3}.
\begin{figure}
\centering
\includegraphics[scale=0.35]{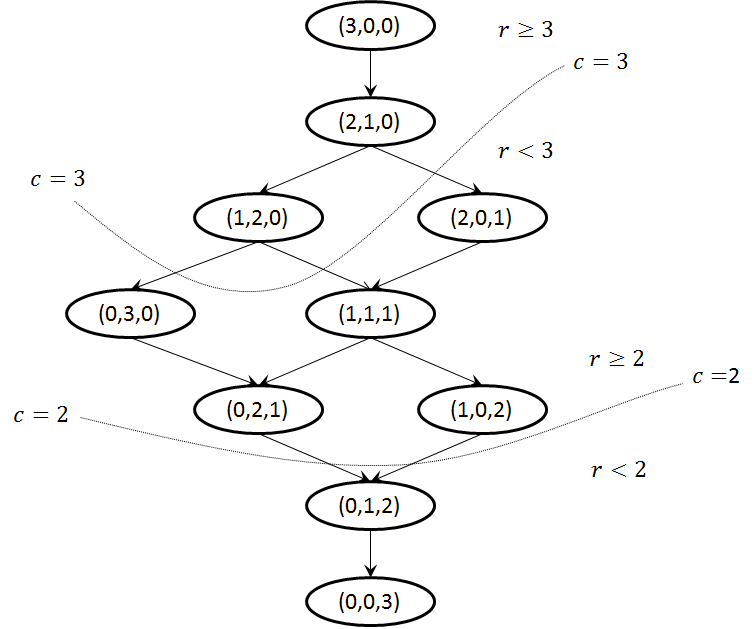}
\vspace{-5pt}
\caption{Dominance-DAG for 3 workers and scores in $\{1,2,3\}$}
\vspace{-15pt}
\label{fig:dag-3,3}
\end{figure}
%
For arbitrary $m,R$, we can define the following dominance relation.
\begin{definition}[Rating Dominance]
\label{def:dominance}
Bucket $B_1$ with response set $(v_R^1,v_{R-1}^1,\ldots,v_1^1)$ \emph{dominates} bucket $B_2$ with response set $(v_R^2,v_{R-1}^2,\ldots,v_1^2)$ if and only if $\exists 1<r'\leq R\ni (v_r^1=v_r^2\forall r\not\in\{r',r'-1\})$ and $(v_{r'}^1=v_{r'}^2+1)\land(v_{r'-1}^1=v_{r'-1}^2-1)$.
\end{definition}
Intuitively, a bucket $B_1$ dominates $B_2$ if increasing the score given by a single worker to $B_2$ by $1$ makes its response set equal to that of items in $B_1$. Note that a bucket can dominate multiple buckets, and a dominated bucket can have multiple dominating buckets, depending on which worker's response is increased by $1$. For instance, 
in Figure \ref{fig:dag-3,3}, bucket $(2,1,0)$ dominates both $(2,0,1)$ (increase one score from ``1'' to ``2'') and $(1,2,0)$ (increase one score from ``2'' to ``3''), both of which dominate $(1,1,1)$. 

\smallskip
\noindent {\bf Dominance-Consistent Mappings.}
As with filtering, we consider the set of mappings satisfying both the bucketizing and dominance constraints, and call them \emph{dominance-consistent mappings}.

Dominance-consistent mappings can be represented using {\em cuts} in the dominance-DAG. To construct a dominance-consistent mapping, we split the DAG into at most $R$ partitions such that no parent node belongs to an intuitively ``lower'' partition than its children. Then we assign ratings to items in a top-down fashion such that all nodes within a partition get a common rating value lower than the value assigned to the partition just above it. Figure \ref{fig:dag-3,3} shows one such dominance-consistent mapping corresponding to a set of cuts. A cut with label $c=i$ essentially partitions the DAG into two sets: the set of nodes above all receive ratings $\geq i$ while all nodes below receive ratings $<i$. 
To find the most likely mapping, we sort the items into buckets and look for mappings over buckets that are consistent with the dominance-DAG. We use an iterative top-down approach to enumerate all consistent mappings. First, we label our nodes in the DAG from $1\ldots {R+m-1\choose R-1}$ according to their topological ordering, with the root node starting at $1$. In the $i^{th}$ iteration, we assume we have the set of all possible consistent mappings assigning values to nodes $1\ldots i-1$ and extend them to all consistent mappings over nodes $1\ldots i$. When the last ${R+m-1\choose R-1}^{th}$ node has been added, we are left with the complete set of all dominance-consistent mappings.
\papertext{Full details of our algorithm, appear in the online technical report \cite{optTR}.}

\techreport{
\begin{algorithm}[h!]
\caption{Dominance-Consistent Mappings}
\label{rating_algo}
\begin{algorithmic}[1]
\STATE $I:=\text{Input Item-set}$
\STATE $M:=\text{Input Evidence Matrix}$
\STATE $F:=\{\}$ \COMMENT{Different dominance-consistent (mappings)}
\STATE $p:=\{\}$ \COMMENT{Worker matrices corresponding to mappings}
\STATE $Likelihood:=\{\}$ \COMMENT{Likelihoods corresponding to mappings}
\STATE Construct $V,E=$ Dominance-DAG
\\ \COMMENT{Enumerating consistent mappings}
\FOR{$v$ in BFS($V$)}
    \STATE{(expand dominance-DAG by BFS)}
    \FOR{$f$ in $F$}
        \STATE{(expand old mappings to include $v$)}
        \STATE{$lower := \min_{v'\in\text{parents}(v)}f(v')$}
        \STATE{$upper := \max_{v'\in\text{parents}(v)}f(v')$}
        \FOR{$i$ in $lower$ to $upper$}
            \STATE $f_{new}[i]:=f\cup \{v=i\}$
            \STATE{$F$.add($f_{new}[i]$)}\COMMENT{add new mappings corresponding to the dominance-consistent possible values for $v$}
        \ENDFOR
        \STATE{Delete $f$}\COMMENT{delete old mappings that only mapped nodes $1,2,\ldots,v-1$}
    \ENDFOR
\ENDFOR
\FOR{$f$ in $F$}
    \STATE{$p[f]:=\text{Params}(f,M)$}
    \STATE{$Likelihood[f]:=\Pr(M|f,p[f])$}
\ENDFOR
\STATE $f^*:=\arg\!\max Likelihood[f]$
\STATE RETURN{$(f^*,p[f^*]])$}
\end{algorithmic}
\end{algorithm}
}

As with the filtering problem, we \techreport{can} show \papertext{in \cite{optTR}} 
that an exhaustive search of the dominance-consistent mappings under this dominance DAG constraint gives us a global maximum likelihood mapping across a much larger space of reasonable mappings. Suppose we have $n$ items, $R$ rating values, and $m$ worker responses per item. The number of buckets of possible worker response sets (nodes in the DAG) is $R+m-1 \choose R-1$. Then, the number of unconstrained mappings is $R^n$ and number of mappings with just the bucketizing condition, that is where items with the same response sets get assigned the same value, is $R^{R+m-1 \choose R-1}$. We enumerate 
a sample set of values in Table \ref{table:num_mappings} for $n=100$ items. We see that the number of dominance-consistent mappings is significantly smaller than the number of unconstrained mappings. The fact that this greatly reduced set of intuitive mappings contains a global maximum likelihood solution displays the power of our approach. 
Furthermore, the number of items may be much larger, which would make the number of unconstrained mappings exponentially larger.
\begin{table}
\centering
\small
\begin{tabular}{ | c | c | c | c | c | }
\hline
$R$ & $m$ & Unconstrained & Bucketized & Dom-Consistent \\ \hline
3 & 3 & $10^{47}$ & $6\times 10^4$ & 126 \\ \hline
3 & 4 & $10^{47}$ & $10^7$ & 462 \\ \hline
3 & 5 & $10^{47}$ & $10^{10}$ & 1716 \\ \hline
4 & 3 & $10^{60}$ & $10^{12}$ & $2.8\times 10^4$ \\ \hline
4 & 4 & $10^{60}$ & $10^{21}$ & $2.7\times 10^{6}$ \\ \hline
5 & 2 & $10^{69}$ & $10^{10}$ & $2.8\times 10^4$ \\ \hline
5 & 3 & $10^{69}$ & $10^{24}$ & $1.1\times 10^8$ \\ \hline
\end{tabular}
\vspace{-10pt}
\caption{Number of Mappings for $n=100$ items\label{table:num_mappings}
\vspace{-15pt}
}
\end{table}


\subsection{Experiments}
\label{sec:rating-exp}
We perform experiments using simulated workers and synthetic data for the rating problem using a setup similar to that described in Section \ref{sec:filter-exp-simulated}. Since our results and conclusions are similar to those in the filtering section, we show the results from one representative experiment and refer interested readers to \papertext{our full technical report for a more extensive experimental analysis.} \techreport{the appendix, Section \ref{sec:appendix-rating-experiments} for further results.}

\smallskip
\noindent {\bf Setup.}
We use 1000 items equally distributed across true ratings of $\{1,2,3\}$ ($R=3$). We randomly generate worker response probability matrices and simulate worker responses for each item to generate one response set $M$. We plot and compare various quality metrics of interest, for instance, the likelihood of mappings and quality of predicted item ratings along the y-axis, and vary the number of worker responses per item, $m$, along the x-axis. Each data point in our plots corresponds to the outputs of corresponding algorithms averaged across 100 randomly generated response sets, that is, 100 different $M$s.
The initializations of the EM algorithms correspond to \xspace{the worker response probability matrices
$ EM(1)=
 \begin{bmatrix}
 0.6 & 0.33 & 0.07\\
 0.33 & 0.34 & 0.33\\
 0.07 & 0.33 & 0.6
 \end{bmatrix}$
 ,
$EM(2)=
 \begin{bmatrix}
 0.34 & 0.33 & 0.33\\
 0.33 & 0.34 & 0.33\\
 0.33 & 0.33 & 0.34
 \end{bmatrix}
$, and
$EM(3)=
 \begin{bmatrix}
 0.07 & 0.33 & 0.6\\
 0.33 & 0.34 & 0.33\\
 0.6 & 0.34 & 0.07
 \end{bmatrix}
 $}.
 Intuitively, $EM(1)$ starts by assuming that workers have low error rates, $EM(2)$ assumes that workers answer questions uniformly randomly, and $EM(3)$ assumes that workers have high (adversarial) error rates. As in Section \ref{sec:filter-exp-simulated}, $EM(*)$ picks the most likely from the three different EM instances for each response set $M$.

\smallskip
\noindent {\bf Likelihood.}
Figure \ref{fig:rating-like} plots the likelihoods (on a natural log scale) of the mappings output by different algorithms along the y-axis. We observe that the likelihoods of the mappings returned by our algorithm, $OPT$, are significantly higher than those of any of the EM algorithms. 
For example, consider $m=5$: we observe that our algorithm finds mappings that are on average 9 orders of magnitude more likely than those returned by $EM(*)$ (in this case the best EM instance). As with filtering, the gap between the performance of our algorithm and the EM instances decreases as the number of workers increases.

\begin{figure}
\centering
\includegraphics[scale=0.3]{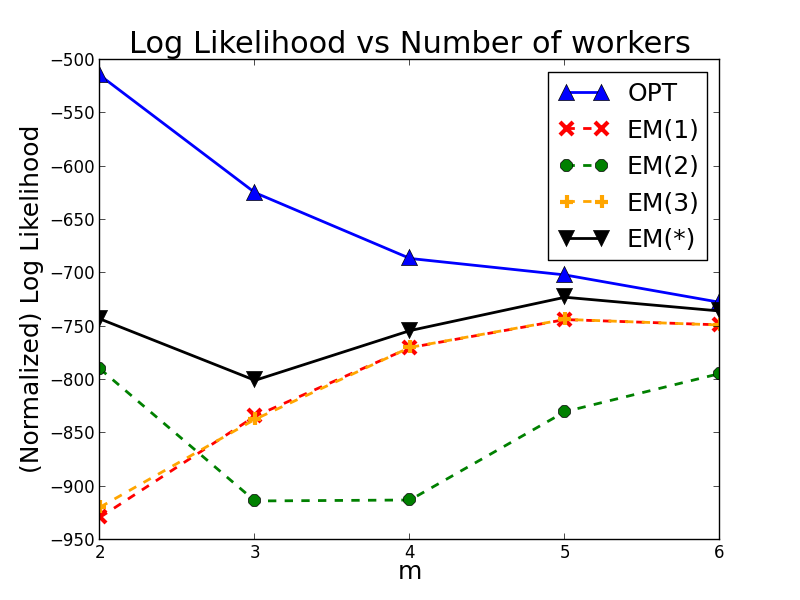}
\vspace{-15pt}
\caption{Likelihood, $R=3$}
\label{fig:rating-like}
\vspace{-15pt}
\end{figure}

\smallskip
\noindent {\bf Quality of item rating predictions.}
In Figure \ref{fig:rating-dist_wtd} we compare the predicted ratings of items against the true ratings used to generate each data point. We measure a weighted score based on how far the predicted value is from the true value; a correct prediction incurs no penalty, a predicted rating that is $\pm 1$ of the true rating of the item incurs a penalty of $1$ and a predicted rating that is $\pm 2$ of the true rating of the item incurs a penalty of $2$. We normalize the final score by the number of items in the dataset. For our example, each item can result in a maximum penalty of 2, therefore, the computed score is in $[0,2]$, with a lower score implying more accurate predictions. Again, we observe that in spite of optimizing for, and providing a global maximum likelihood guarantee, our algorithm predicts item ratings with a high degree of accuracy. 


\begin{figure}
\centering
\vspace{-5pt}
\includegraphics[scale=0.3]{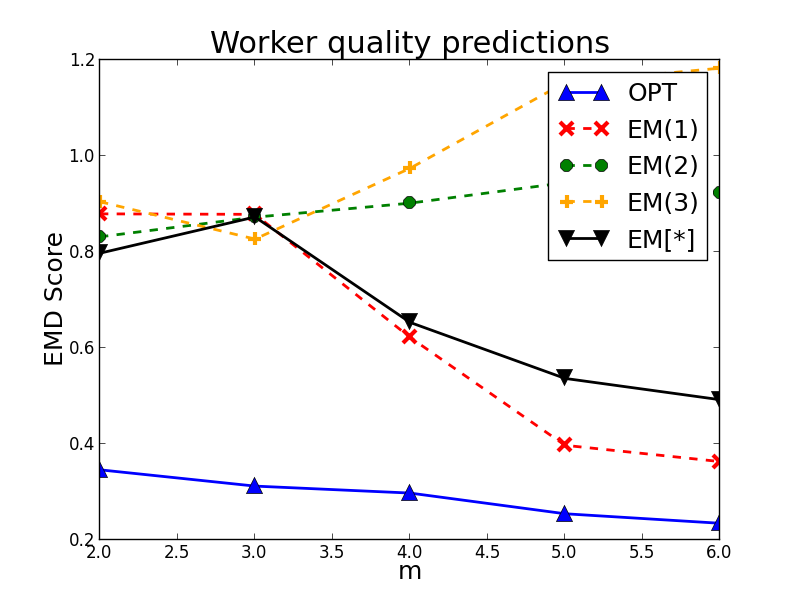}
\vspace{-10pt}
\caption{Item prediction (distance weighted), $R=3$}
\label{fig:rating-dist_wtd}
\vspace{-15pt}
\end{figure}

Comparing these results to those in Section \ref{sec:filter-exp-simulated}, our gains for rating are significantly higher because the number of parameters being estimated is much higher, and the EM algorithm has more ``ways'' it can go wrong if the parameters are initialized incorrectly. That is, with a higher dimensionality we expect that EM converges more often to non-optimal local maxima.

\papertext{We perform further experiments on synthetic data across a wider range of input parameters and report our complete findings in our full technical report \cite{optTR}. }
We observe that in spite of being optimized for likelihood, our algorithm performs well, often beating EM, for different metrics of comparison on the predicted item ratings and worker response probability matrices.

\subsection{Formalizing dominance}
\label{sec:rating-dom_prop}
In this section, we formalize our dominance relation and prove that it is in fact a partial order, and more specifically, a lattice. Let $\mathbf{V}$ be the set of all possible item response sets. Recall from Section \ref{sec:rating-algo} that $|\mathbf{V}|={R+m-1 \choose R-1}$. Definition \ref{def:dominance} defines the notion of one response set just dominating, or {\em covering} another response set. If response set $V_1$ covers response set $V_2$ under Definition \ref{def:dominance}, we write $V_1\succ V_2$. We extend that definition to include transitive dominance below.
\begin{definition}[Transitive dominance]
\label{def:trans}
Let $I_1,I_2,I_3$ be any three items with response sets $M(I_1)=V_1,M(I_2)=V_2,M(I_3)=V_3$. We define the transitive dominance relation ($\succ_{t}$) on sets $\mathbf{I}$ and $\mathbf{V}$ as follows:
\begin{enumerate}
\item $I\succeq_t I\forall I\in\mathbf{I}$ and $V\succeq_t V\forall V\in\mathbf{V}$
\item If $I_1\succ I_2$, then $I_1\succeq_{t}I_2$. Similarly, $V_1\succ V_2\Rightarrow V_1\succeq_{t}V_2$
\item If $I_1\succeq_{t}I_2\land I_2\succeq_{t}I_3$, then $I_1\succeq_{t}I_3$. Similarly, $V_1\succeq_{t}V_2\land V_2\succeq_{t}V_3\Rightarrow V_1\succeq_{t}V_3$
\end{enumerate}
If $I_1\succeq_t I_2\land I_1\neq I_2$, we write $I_1\succ_t I_2$. Intuitively, the transitive dominance relation constitutes the transitive closure of the dominance relation.
\end{definition}

\begin{definition}[Partial Ordering]
\label{defn:poset}
Let $\geq$ be a binary relation on set $S$. We say that $\geq$ defines a partial order on $S$ if the following are satisfied for all $x,y,z\in S$:
\begin{enumerate}
\item Reflexivity: $x\geq x$.
\item Antisymmetry: If $x\geq y$ and $y\geq x$, then $x=y$.
\item Transitivity: If $x\geq y$ and $y\geq z$, then $x\geq z$.
\end{enumerate}
\end{definition}

We show below that our dominance relation imposes a partial ordering on the set of possible item response sets. To do so, we first introduce the idea of a {\em cumulative distribution}, and use it to characterize our transitive dominance relation.

\begin{lemma}[Cumulative Distribution]
\label{lemma:cum_dist}
Let $A=$ \\$(a_R,a_{R-1},\ldots,a_1),B=(b_R,b_{R-1},\ldots,b_1)\in\mathbf{V}$ be any two realizations such that $A\succeq_t B$. Let $Cum(A)=(A_R,A_{R-1},\ldots,A_1)$ and $Cum(B)=(B_R,B_{R-1},\ldots,B_1)$ be their cumulative distribution functions, where $A_j=\sum_{i=1}^ja_i$ and $B_j=\sum_{i=1}^jb_i$. Then, $A_i\geq B_i\forall i\in 1\text{ to }n$.
\end{lemma}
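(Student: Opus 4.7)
The plan is to prove the lemma by induction on the length of a covering chain witnessing $A \succeq_t B$. By Definition \ref{def:trans}, the relation $\succeq_t$ is the reflexive--transitive closure of the single-step cover relation $\succ$, so there exists a finite chain $A = C^{(0)} \succ C^{(1)} \succ \cdots \succ C^{(k)} = B$, where the case $k=0$ (i.e., $A = B$) is immediate from reflexivity. It therefore suffices to establish the desired coordinate-wise inequality on cumulative distributions for a single cover $C \succ C'$, and then iterate along the chain.

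For the single-cover base case, I would unfold Definition \ref{def:dominance}: there exists an index $r'$ with $c_{r'} = c'_{r'} + 1$, $c_{r'-1} = c'_{r'-1} - 1$, and $c_r = c'_r$ for every other rating $r$. Substituting this into $C_j - C'_j = \sum_{i=1}^j (c_i - c'_i)$ and splitting the range of $j$ into three cases --- (i) $j < r'-1$, (ii) $j = r'-1$, and (iii) $j \geq r'$ --- every summand vanishes except in case (ii). In case (i) every $c_i$ equals $c'_i$; in case (iii) the $-1$ at position $r'-1$ and the $+1$ at position $r'$ both lie inside the partial sum and cancel. Only in case (ii) does the partial sum capture the imbalance at position $r'-1$ without its compensating partner at $r'$, so the cumulative distributions of $C$ and $C'$ agree at every coordinate except at $j=r'-1$, where they differ by exactly one unit, with a sign fixed by which of $C,C'$ is the dominator.

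The induction step is then essentially telescoping. Cumulative differences are additive along the chain, so $A_j - B_j = \sum_{\ell=1}^k \bigl(C^{(\ell-1)}_j - C^{(\ell)}_j\bigr)$, and each summand is a signed unit whose direction is fixed by the base-case analysis and is uniform across the chain (since each step is a cover in the same direction). Thus the coordinate-wise inequality at every $i \in \{1,\dots,R\}$ (reading the lemma's ``$n$'' as $R$) follows by summing base-case inequalities of consistent sign. The main subtlety, and really the only place where one could slip, is in the base-case bookkeeping: verifying that the sign of the unit gap at the exceptional coordinate $j=r'-1$ lines up with the direction of $\succeq_t$ as used in the lemma's statement, so that the inductive accumulation yields the inequality $A_i \geq B_i$ claimed by the lemma rather than its reverse. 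Once that sign is pinned down correctly, the telescoping argument closes out all remaining cases uniformly.
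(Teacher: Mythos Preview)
Your proposal is correct and takes essentially the same approach as the paper: both decompose $A \succeq_t B$ into a finite chain of single-step covers and track the one-unit change in the cumulative sums at each step. The only cosmetic difference is that the paper aggregates all the single-step shifts into one vector $\delta=(\delta_R,\ldots,\delta_1)$ and writes $Cum(B)=Cum(A)-\delta$ in one shot rather than telescoping; your flagged sign check at coordinate $j=r'-1$ is indeed the one place requiring care (the paper's own phrasing of the single-step move is loose on exactly this point).
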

\noindent{\bf Proof.} Let $A=X_1\succ X_2\succ \ldots \succ X_k=B$ be a sequence of realizations just dominating (or covering) the next. Intuitively, to move from $X_j=(x_{j,1},x_{j,2},\ldots,x_{j,n})$ to $X_{j+1}=$ \\$(x_{j+1,1},x_{j+1,2},\ldots,x_{j+1,n})$, we need to shift one vote from some bucket $k$ to $k+1$. That is, $x_{j,k}\rightarrow x_{j,k}-1$ and $x_{j,k+1}\rightarrow x_{j,k+1}+1$ (follows from Definition \ref{def:dominance}).
The path from $A=X_1$ to $B=X_k$ can be represented by a sequence of such unit vote moves towards higher ratings. Let the total number of votes shifted from bucket $i$ to bucket $i+1$ in the entire path $<X_1,X_k>$ be $\delta_i$.
Then, $B=(b_R,b_{R-1},\ldots,b_1)=(a_R-\delta_R+\delta_{R-1},\dots,a_3-\delta_3+\delta_2,a_2-\delta_2+\delta_1,a_1-\delta_1)$. Note that we are constrained by $\delta_R=0$ (cannot move any further than highest bucket, $R$) and $0\leq \delta_i\leq a_i+\delta_{i-1}\forall i<R$. Now, it is easy to verify that $Cum(B)=(B_R,B_{R-1},\ldots,B_1)=(A_R-\delta_R,A_{R-1}-\delta_{R-1},\ldots,A_1-\delta_1)$. Since $\delta_i\geq 0\forall i$, we have $A_i=B_i+\delta_i\Rightarrow A_i\geq B_i\forall i$.
\qedr\\

Lemma \ref{lemma:cum_dist} gives us a way to represent descendants in our dominance ordering using the cumulative distribution function. We use this idea to prove that our dominance relation is a partial order on the set of realizations, and more specifically, a lattice.

\begin{lemma}[Partial Order]
\label{lemma:poset}
The relation $\succeq_t$ on the set of items $\mathbf{I}$ or the set of response sets, $\mathbf{V}$ defines a partial ordering on the respective domains.
\end{lemma}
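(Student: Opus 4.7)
The plan is to directly verify the three properties in Definition \ref{defn:poset} for the relation $\succeq_t$. Since items inherit their ordering directly from their response sets, it suffices to prove the claim for $\mathbf{V}$; the claim for $\mathbf{I}$ then follows by lifting through $M$. The main obstacle will be antisymmetry, which is not baked into the definition of $\succeq_t$ and requires us to extract structural information from a mutual dominance relationship. The key tool for this will be Lemma \ref{lemma:cum_dist}, which translates $\succeq_t$ into pointwise inequalities between cumulative distribution functions.

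For reflexivity and transitivity, both are essentially free: clauses (1) and (3) of Definition \ref{def:trans} state them outright. In particular, $V \succeq_t V$ for every $V \in \mathbf{V}$, and if $V_1 \succeq_t V_2$ and $V_2 \succeq_t V_3$, then $V_1 \succeq_t V_3$ by the transitive closure clause. So these two properties will each require only one line, pointing at the relevant part of Definition \ref{def:trans}.

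The heart of the proof is antisymmetry. Suppose $A, B \in \mathbf{V}$ satisfy both $A \succeq_t B$ and $B \succeq_t A$. Applying Lemma \ref{lemma:cum_dist} to $A \succeq_t B$ yields $A_i \geq B_i$ for all $i \in \{1, \ldots, R\}$, where $A_i = \sum_{j=1}^{i} a_j$ and similarly for $B_i$. Applying Lemma \ref{lemma:cum_dist} to $B \succeq_t A$ yields the reverse inequality $B_i \geq A_i$ for all $i$. Combining the two gives $A_i = B_i$ for every $i$. Since the original distribution is recoverable from the cumulative distribution via $a_i = A_i - A_{i-1}$ (with $A_0 = 0$), and likewise for $b_i$, we conclude $a_i = b_i$ for all $i$, hence $A = B$. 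This establishes antisymmetry on $\mathbf{V}$.

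Finally, to extend the result to $\mathbf{I}$: the relation on items is defined through their response sets (i.e., $I_1 \succeq_t I_2$ iff $M(I_1) \succeq_t M(I_2)$), and Definition \ref{def:trans} treats items and response sets symmetrically. Reflexivity and transitivity on $\mathbf{I}$ follow from the corresponding clauses verbatim. Note, however, that antisymmetry on $\mathbf{I}$ requires a small caveat: two distinct items can share the same response set, so mutual dominance only forces $M(I_1) = M(I_2)$, not $I_1 = I_2$. The proper reading is therefore that $\succeq_t$ is a partial order on $\mathbf{V}$ and a preorder on $\mathbf{I}$ that becomes a partial order once items are identified by their response sets (which is precisely what the bucketizing step of our algorithm does). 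With that identification the partial order claim holds on both domains.
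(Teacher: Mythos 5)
Your proof is correct and follows essentially the same route as the paper's: reflexivity and transitivity read off directly from Definition \ref{def:trans}, and antisymmetry obtained by applying Lemma \ref{lemma:cum_dist} in both directions to force equality of the cumulative distributions and hence of the response sets. Your added caveat that $\succeq_t$ is only a preorder on $\mathbf{I}$ (since distinct items may share a response set) is a valid refinement the paper's proof silently skips by arguing only over $\mathbf{V}$.
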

\noindent{\bf Proof.} We show that $(\mathbf{V},\succeq_t)$ is a partial order. 
From Definition \ref{def:trans}, we have $V\succeq V\forall V\in\mathbf{V}$. So, our dominance relation $\succeq$ is reflexive.\\
Let $A\succeq_t B$ and $B\succeq_t A$ for some $A=(a_R,a_{R-1},\ldots,a_1),B=(b_R,b_{R-1},\ldots,b_1)\in\mathbf{V}$. Consider the cumulative distribution function, $Cum(A)=(A_R,A_{R-1},\ldots,A_1)$ and $Cum(B)=$ \\$(B_R,B_{R-1},\ldots,B_1)$, where $A_j=\sum_{i=1}^ja_i$ and $B_j=\sum_{i=1}^jb_i$. From Lemma \ref{lemma:cum_dist}, we have $A\succeq_t B\Rightarrow A_i\geq B_i\forall i$. Similarly, $B\succeq_t A\Rightarrow B_i\geq A_i\forall i$. Combining, we have $A_i=B_i\forall i\Rightarrow a_i=b_i\forall i$. Therefore, $A=B$ and $\succeq_t$ is antisymmetric.\\
From Definition \ref{def:trans}, we have $V_1\succeq_{t}V_2\land V_2\succeq_{t}V_3\Rightarrow V_1\succeq_{t}V_3\forall V_1,V_2,V_3\in\mathbf{V}$. So, $\succeq_t$ is transitive.\\
Therefore, the relation $\succeq_t$ is a partial order.
\qedr\\

We further show that the partial order imposed by our dominance relation, $\succeq_t$, is in fact a lattice. A lattice can be defined as follows.

\begin{definition}[Lattice]
A partially ordered set $(\mathbf{V},\succeq)$ is a lattice if it satisfies the following properties:
\begin{enumerate}
\item $\mathbf{V}$ is finite.
\item There exists a maximum element $V^*\in\mathbf{V}$ such that $V^*\succeq V\forall V\in\mathbf{V}$.
\item Every pair of elements has a greatest lower bound (meet), that is, $\forall V_1,V_2\in\mathbf{V}\exists V'\in \mathbf{V}\ni (V_1,V_2\succeq V')\land(\not\exists V\in\mathbf{V})\ni V_1,V_2\succeq V\succ V'$.
\end{enumerate}
\end{definition}

We now show that our partial ordering $\succeq_t$ on the set of realizations $\mathbf{V}$ is a lattice.

\begin{theorem}[Lattice Proof]
The dominance partial ordering $(\mathbf{V},\succeq_t)$ is a lattice.
\end{theorem}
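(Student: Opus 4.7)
The plan is to verify the three properties required by the lattice definition one at a time, with Lemma~\ref{lemma:cum_dist} (and a converse of it) serving as the central tool throughout. Finiteness is immediate: $\mathbf{V}$ is the set of nonnegative integer tuples of length $R$ summing to $m$, so $|\mathbf{V}|=\binom{R+m-1}{R-1}$, which is finite.

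For the maximum element, I would take $V^{\star}=(m,0,\ldots,0)$, the response set in which every worker assigns rating $R$. Its cumulative vector is $(m,m,\ldots,m)$, and this dominates $Cum(V)$ component-wise for any $V\in\mathbf{V}$. To conclude $V^{\star}\succeq_t V$, I would first establish a converse of Lemma~\ref{lemma:cum_dist}: whenever $A_i\geq B_i$ for all $i$, we have $A\succeq_t B$. My approach is induction on the monovariant $\sum_i(A_i-B_i)$. At each step, as long as $A\neq B$, I pick the largest index $r'$ at which $A$ strictly exceeds $B$ component-wise, perform a unit downshift in $A$ at position $r'$ (moving one vote from bucket $r'$ to $r'-1$), and check that the resulting cumulative vector still dominates $Cum(B)$ while keeping all bucket counts nonnegative.

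For the greatest lower bound of arbitrary $V_1,V_2\in\mathbf{V}$, I would propose $V'$ to be the response set whose cumulative vector is the pointwise minimum $C_i=\min\{Cum(V_1)_i,\,Cum(V_2)_i\}$. Verifying $V'\in\mathbf{V}$ is easy: $C$ is non-decreasing as the pointwise minimum of two non-decreasing sequences, and $C_R=m$ since $V_1$ and $V_2$ each have total mass $m$, so $C$ determines a valid tuple with nonnegative coordinates summing to $m$. The converse of Lemma~\ref{lemma:cum_dist} then yields $V_1,V_2\succeq_t V'$. For the ``greatest'' clause, if $V''$ is any lower bound of $V_1,V_2$, the forward direction of Lemma~\ref{lemma:cum_dist} gives $Cum(V_k)_i\geq Cum(V'')_i$ for $k=1,2$, hence $Cum(V'')_i\leq C_i=Cum(V')_i$, and another appeal to the converse yields $V'\succeq_t V''$.

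The main obstacle will be supplying the converse of Lemma~\ref{lemma:cum_dist}, which is used three times in the plan but is not proved in the excerpt. The technical subtlety is that the inductive downshift procedure must at every step produce a syntactically valid response set with nonnegative bucket counts; choosing the largest-index discrepancy at each stage is precisely what guarantees this invariant, and I expect a short case analysis on which endpoint the discrepancy sits at will close the argument cleanly.
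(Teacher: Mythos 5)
Your proof is correct, and it reaches the same candidate meet as the paper --- the response set whose cumulative vector is the pointwise minimum of the two given cumulative vectors (the paper's $\delta^*_i=\max(0,A_i-B_i)$ gives $Cum(D^*)_i=\min(A_i,B_i)$, exactly your $C$) --- but it justifies greatest-ness by a genuinely different and more modular route. The paper never states the converse of Lemma~\ref{lemma:cum_dist}; instead it argues directly with the $\delta$-decomposition that (a) no strict ancestor of $D^*$ is a common lower bound and (b) every other common lower bound has a strict ancestor that is still a common lower bound, and then appeals (somewhat implicitly) to finiteness to conclude. You instead isolate the converse --- componentwise domination of cumulative vectors implies $\succeq_t$ --- as a standalone lemma, after which everything (the top element, both halves of the meet property) reduces to trivial facts about the componentwise order; in effect you exhibit an order isomorphism between $(\mathbf{V},\succeq_t)$ and a sublattice of $\mathbb{Z}^R$ under componentwise comparison. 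This buys you something the paper actually needs but does not supply: the paper merely asserts $V^*\succeq_t V$ for all $V$, and that assertion is precisely an instance of your converse lemma. Your inductive construction of the converse is sound, and your observation that choosing the \emph{largest} index of strict discrepancy is what keeps every intermediate bucket count nonnegative is exactly the right invariant (maximality of $r'$ forces the $r'$-th bucket of the current tuple to be nonempty). Two small cautions: be consistent about the orientation of the cumulative sum --- your top-element computation $Cum(V^*)=(m,\ldots,m)$ uses tail sums $\sum_{i\geq j}$, while your later remark that $C$ is non-decreasing with $C_R=m$ uses the paper's prefix sums $\sum_{i\leq j}$; either convention works, but the paper itself already has the direction of Lemma~\ref{lemma:cum_dist} flipped relative to Definition~\ref{def:dominance}, so you should fix one orientation explicitly rather than inherit the ambiguity.
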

\noindent{\bf Proof.}
It is easy to see that our set of realizations is finite ($|\mathbf{V}|={R+M-1 \choose R-1}$.
Next, consider the element $V^*=(M,0,\ldots,0)$. We have, $V^*\succeq_t V\forall V\in\mathbf{V}\setminus V^*$.
Therefore, all that remains to be shown is that every pair of realizations has a unique greatest lower bound.

Let $A=(a_R,a_{R-1},\ldots,a_1),B=(b_R,b_{R-1},\ldots,b_1)\in\mathbf{V}$ be any two realizations with cumulative distribution functions $Cum(A)=(A_R,A_{R-1},\ldots,A_1)$ and $Cum(B)=(B_R,B_{R-1},\ldots,B_1)$, $A_j=\sum_{i=1}^ja_i$ and $B_j=\sum_{i=1}^jb_i$. Let $D=(d_R,d_{R-1},\ldots,d_1)$ be any common descendant (or lower bound) of $A,B$, that is $A,B\succeq_t D$. Let $Cum(D)=(D_R,D_{R-1},\ldots,D_1)$. From Lemma \ref{lemma:cum_dist} it follows that we can find $\delta=(\delta_R,\ldots,\delta_1)$ and $\delta'=(\delta'_R,\ldots,\delta'_1)$ such that $Cum(D)=(A_R-\delta_R,\ldots,A_1-\delta_1)=(B_R-\delta'_R,\ldots,B_1-\delta'_1)$ where $0\leq \delta_i\leq A_i$, $0\leq \delta'_i\leq B_i$ $\forall i<R$, and $\delta_R=\delta'_R=0$.

Choose $\delta^*_i=\max(0,A_i-B_i)$ and $\delta'^*_i=\max(0,B_i-A_i)$. That is, if $A_i<B_i$, we have $\delta^*_i=0,\delta'^*_i=B_i-A_i$ and if $B_i\leq A_i$, $\delta^*_i=A_i-B_i,\delta'^*_i=0$. Let $D^*$ be the lower bound to $A,B$ constructed from $\delta^*,\delta'^*$ such that  $Cum(D^*)=Cum(A)-\delta^*=Cum(B)-\delta'^*$ and $0\leq \delta^*_i\leq A_i$, $0\leq \delta'^*_i\leq B_i$ $\forall i<R$, and $\delta^*_R=\delta'^*_R=0$. So, $D^*$ is a common lower bound of $A,B$. We claim that $D^*$ is in fact the greatest lower bound of $A,B$. We prove our claim in two steps.

First, let $D$ be any strict upper bound or ancestor of $D^*$. We show that $D$ cannot be a common lower bound to $A,B$. From Lemma \ref{lemma:cum_dist}, we have $Cum(D)=Cum(D^*)+\Delta$ where $\Delta=(\Delta_R,\ldots,\Delta_1)$, such that $\Delta_i\geq 0\forall i$ and $\Delta_{k}>0$ for some $k$. Now $D_k=D^*_k+\Delta_k$. From our construction of $D^*$, we have $D^*_k=A_k-\delta^*_k=B_k-\delta'^*_k$ where one of $\{\delta^*_k,\delta'^*_k\}$ is $0$. Without loss of generality, suppose $\delta_k=0$. Then, $D_k=A_k+\Delta_k$. Since $D_k>A_k$, by Lemma \ref{lemma:cum_dist} $A\not\succeq_t D$.

Second, let $D\neq D^*$ be any lower bound to $A,B$. We show that $\exists D'\succ_t D$ such that $D'$ is also a lower bound to $A,B$. Let $D$ be the lower bound constructed from $\delta,\delta'$, that is, $D_i=A_i-\delta_i=B_i-\delta'_i\forall i$. Now, $\exists k\ni \delta_k,\delta'_k\neq 0$ (otherwise $D=D^*$). Construct $D'$ such that $D'_i=D_i\forall i\neq k$ and $D'_k=D_k+\min(\delta_k,\delta'_k)$. It is easy to verify that $D'\succ_t D$ and $D'$ is a lower bound of $A,B$.

Combining the facts that (a) $D^*$ is lower bound of $A,B$ with no ancestor that is also a lower bound of $A,B$, and (b) Any other lower bound of $A,B$ can be shown to have an ancestor that is also a lower bound of $A,B$, we have $D^*$ is greatest lower bound of $A,B$.\\
This completes our proof.
\qedr\\

Next, we formally define dominance-consistent mappings that are consistent with the above intuition.

\begin{definition}[Dominance-Consistent Mapping]
\label{consistent_mapping}
We call a function $f^\delta\in\mathbf{F}:\mathbf{I}\rightarrow [1,R]$ a Dominance-Consistent mapping if it satisfies the following properties:
\begin{enumerate}
\item Let $I_1,I_2\in\mathbf{I}$ be any two items. If $M(I_1)=M(I_2)$, then $f^\delta(I_1)=f^\delta(I_2)$.
\item Let $V_1,V_2\in\mathbf{V}\ni M(I_1)=V_1\succ_t V_2=M(I_2)$. Then, $f^\delta(I_1)\leq f^\delta(I_2)$.
\end{enumerate}
We denote the set of all dominance consistent mappings by $\mathbf{F^\delta}\subseteq\mathbf{F}$.
\end{definition}
Intuitively, the first property ensures that a consistent mapping assigns the same bucket to items with the same observed response sets. The second property states that the mapping is consistent with the transitive dominance relation, that is, an item with a better response  set is mapped to at least as high a bucket as an item with a worse response set. Note that it is crucial to use the transitive dominance relation, and not just the dominance relation when defining consistent mappings to preserve our intuition. Otherwise, consider an example where there exist two items $I_1,I_2\in\mathbf{I}$ such that $M(I_1)\succ_t M(I_2)$ and yet, $\not\exists I\in\mathbf{I}$ such that $M(I_1)\succ M(I)$. It would then be possible to construct a consistent mapping, $f$, with $f(I_1)>f(I_2)$, which would violate the intuition behind consistent mappings.


\section{Extensions}
\label{sec:extensions}

In this section we discuss the generalization of our bucketizing and dominance-based approach to some extensions of the filtering and rating problems. Recall our two major assumptions: (1) every item receives the same number ($m$) of responses, and (2) all workers are randomly assigned and their responses are drawn from a common distribution, $p(i,j)$. We now relax each of these requirements and describe how our framework can be applied.

\subsection{Variable number of responses}
\label{sec:extensions-variable}
Suppose different items may receive different numbers of worker responses, \eg because items are randomly chosen, or workers 
choose some questions preferentially over others.
Note in this section we are still assuming that all workers have the same response probability matrix $p$.

For this discussion we restrict ourselves to the filtering problem; a similar analysis can be applied to rating. Suppose each item can receive a maximum of $m$ worker responses, with different items receiving different numbers of responses. Again, we bucketize items by their response sets and try to impose a dominance-ordering on the buckets. 
Now, instead of only considering response sets of the form $(m-j,j)$, we consider arbitrary $(i,j)$. Recall that a response set $(i,j)$ denotes that an item received $i$ ``1'' responses and $j$ ``0'' responses. We show the imposed dominance ordering in Figure \ref{fig:ext-1}.

We expect an item that receives $i$ ``1'' responses and $j$ ``0'' responses to be more likely to have true value ``1'' than an item with $i-1$ ``1'' responses and $j$ ``0'' responses, or an item with $i$ ``1'' responses and $j+1$ ``0'' responses. So, we have the dominance relations $(i,j)>(i-1,j)$ where $i\geq 1, j\geq 0, i+j\leq m$, and $(i,j)>(i,j+1)$ with $i,j\geq 0, i+j+1\leq m$.
Note that the dominance ordering imposed in Section \ref{sec:filtering}, $(m,0)>(m-1,1)>\ldots>(0,m)$, is implied transitively here. For instance, $(m,0)>(m-1,0)\land (m-1,0)>(m-1,1)\Rightarrow (m,0)>(m-1,1)$. Also note that this is a partial ordering as certain pairs of buckets, $(0,0)$ and $(1,1)$ for example, cannot intuitively be compared.

Again, we can reduce our search for the maximum likelihood mapping to the space of all bucketized mappings consistent with this dominance (partial) ordering. That is, given item set $\mathbf{I}$ and response set $M$, we consider mappings $f:\mathbf{I}\rightarrow \{0,1\}$, where $M(I_1)=M(I_2)\Rightarrow f(I_1)=f(I_2)$ and $M(I_1)>M(I_2)\Rightarrow f(I_1)\geq f(I_2)$. We show two such dominance consistent mappings, $f_i$ and $f_j$ in Figure \ref{fig:ext-1}. Mapping $f_i$ assigns all items with at least $i$ ``1'' worker responses to a value of 1 and the rest to a value of 0. Similarly, mapping $f_j$ assigns all items with at most $j$ ``0'' responses a value of 1 and the rest a value of 0. We can construct a third dominance-consistent mapping $f_{ij}$ from a conjunction of these two: $f_{ij}(I)=1$ if and only if $f_i=1\land f_j=1$, that is, $f_{ij}$ assigns only gives those items that have at least $i$ ``1'' worker responses and at most $j$ ``0'' responses, a value of 1. We can now describe $f_i$ and $f_j$ as special instances of the dominance-consistent mapping $f_{ij}$ when $j=m$ and $i=0$ respectively.

We claim that all dominance-consistent mappings for this setting can be described as the union of different $f_{ij}$s for a set of $0\leq i,j\leq m$, for a total of $O(2^m)$ dominance-consistent mappings. Note that although this expression is exponential in the maximum number of worker responses per item, $m$, for most practical applications this is a very small constant. \papertext{The proof for this statement and further discussion can be found in the technical report \cite{optTR}.}\techreport{We discuss this statement and describe our proof for it in the appendix, Section \ref{sec:appendix-extensions-variable}.}

\subsection{Worker classes}
So far we have assumed that all workers are identical, in that they draw their answers from the same response probability matrix, a strong assumption that does not hold in general. Although we could argue that different worker matrices could be aggregated into one average probability matrix that our previous approach discovers, if we have fine-grained knowledge about workers, we would like to exploit it. In this section we consider the setting where there are two of classes of workers,
\emph{expert} and \emph{regular} workers to evaluate the same set of items. 
We discuss the generalization to larger numbers of worker classes below.

We now model worker behavior as two different response probability matrices, the first corresponding to expert workers who have low error rates, and the second corresponding to regular workers who have higher error rates. Our problem now becomes that of estimating the items' true values in addition to both of the response probability matrices.
For this discussion, we consider the filtering problem; 
a similar analysis can be applied to the rating case.

Again, we extend our ideas of bucketizing and dominance to this setting. Let $(y_e,n_e,y_r,n_r)$ be the bucket representing all items that receive $y_e$ and $n_e$ responses of ``1'' and ``0'' respectively from experts, and $y_r$ and $n_r$ responses of ``1'' and ``0'' respectively from regular workers. A dominance partial ordering can be defined using the following rules. An item (respectively bucket) with response set $B_1=(y_e^1,n_e^1,y_r^1,n_r^1)$ dominates an item (respectively bucket) with response set $B_2=(y_e^2,n_e^2,y_r^2,n_r^2)$ if and only if one of the following is satisfied:
\begin{denselist}
\item $B_1$ sees more responses of ``1'' and fewer responses of ``0'' than $B_2$. That is, $(y_e^1\geq y_e^2)\land(y_r^1\geq y_r^2)\land(n_e^1\leq n_e^2)\land(n_e^1\leq n_e^2)$ where at least one of the inequalities is strict.
\item $B_1$ and $B_2$ see the same number of ``1'' and ``0'' responses in total, but more experts respond ``1'' to $B_1$ and ``0'' to $B_2$. That is, $(y_e^1+y_r^1=y_e^2+y_r^2)\land(n_e^1+n_r^1=n_e^2+n_r^2)\land(y_e^1\geq y_e^2)\land(n_e^1\leq n_e^2)$ where at least one of the inequalities is strict.
\end{denselist}
As before, we consider only the set of mappings that assign all items in a bucket the same value while preserving the dominance relationship, that is, dominating buckets get at least as high a value as dominated buckets. 

Note that the second dominance condition above leverages the assumption that experts have smaller error probabilities than regular workers. If we were just given two classes of workers with no information about their response probability matrices, we could only use the first dominance condition.
In general, having more information about the error probabilities of
worker classes allows us to construct stronger dominance conditions,
which in turn reduces the number of dominance-consistent mappings. This
property allows our framework to be flexible and adaptable to different granularities of prior knowledge.

\begin{figure}
\vspace{-5pt}
\centering
\includegraphics[scale=0.38]{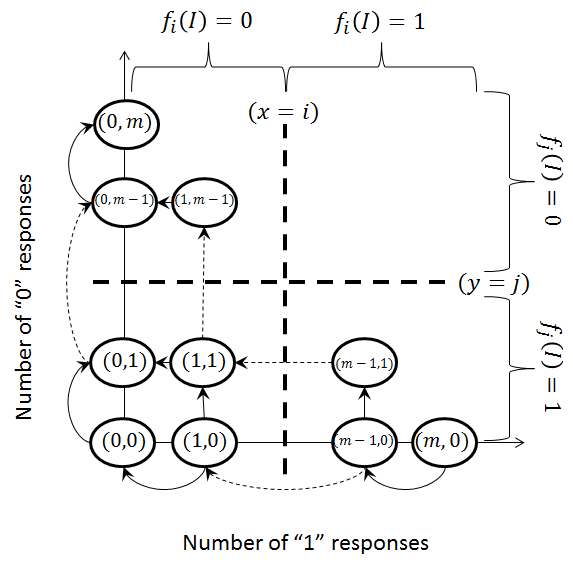}
\vspace{-10pt}
\caption{Variable number of responses}
\label{fig:ext-1}
\vspace{-20pt}
\end{figure}

While this extension is reasonable when the number of distinct worker classes is small, it is impractical to generalize it to a large number of classes. One heuristic approach to tackling the problem of a large number of worker classes, or independent workers, could be to divide items into a large number discrete groups and assign a small distinct set of workers to evaluate each group of items. We then treat and solve each of the groups independently as a problem instance with a small number worker classes.  More efficient algorithms for this setting is a topic for future work.

\section{Conclusions}
\label{conclusions}
We have taken a first step towards finding a global maximum likelihood solution to the problem of jointly estimating the item ground truth, and worker quality, in crowdsourced filtering and rating tasks. 
Given worker ratings on a set of items (binary in the case of filtering), we show that the problem of jointly estimating the ratings of items and worker quality can be split into two independent problems. We use a few key, intuitive ideas to first find a global maximum likelihood mapping from items to ratings, thereby finding the most likely ground truth. We then show that the worker quality, modeled by a common response probability matrix, can be inferred automatically from the corresponding maximum likelihood mapping. We develop a novel pruning and search-based approach, in which we greatly reduce the space of (originally exponential) potential mappings to be considered, and prove that an exhaustive search in the reduced space is guaranteed to return a maximum likelihood solution.

We performed experiments on real and synthetic data to compare our algorithm against an Expectation-Maximization based algorithm. We show that in spite of being optimized for the likelihood of mappings, our algorithm estimates the ground truth of item ratings and worker qualities with high accuracy, and performs well over a number of comparison metrics.

Although we assume throughout most of this paper 
that all workers draw their responses independently from a common probability matrix, we generalize our approach to the cases where different worker classes draw their responses from different matrices. 
Likewise, we assume a fixed number of responses for each item, but we can generalize to the case where different items may receive different numbers of responses.

It should be noted that although our framework generalizes to these extensions, including the case where each worker has an independent, different quality, the algorithms can be inefficient in practice. We have not considered the problem of item difficulties in this paper, assuming that workers have the same quality of responses on all items.
As future work, we hope that the ideas described in this paper can be built upon to design efficient algorithms that find a global maximum likelihood mapping under more general settings. 


\papertext{\newpage}
{\scriptsize
\bibliographystyle{abbrv}
\bibliography{mab,main,agprefs}
}

\appendix

\section{Filtering}
\label{sec:appendix-filter}
\subsection{Experiments}
\label{sec:appendix-filter-exp}
\smallskip
\noindent {\bf Metrics.} 
Earth-movers distance, or EMD, is a metric function that captures how similar two probability distributions are. Intuitively, if the two distributions are represented as piles of sand, EMD is a measure of the minimum amount of sand that needs to be shifted to make the two piles equal. In our problem, the worker response matrix $p$ can be represented as two probability distributions corresponding to $p(i,1)$ and $p(i,0)$, that is the probability distributions of worker responses given that the true value of an item is 1 and 0 respectively. We compute the EMD of $p(i,1)$ from $p_{\text{true}}(i,1)$ and $p(i,0)$ from $p_{\text{true}}(i,0)$ and record their sum as the EMD ``score'' of the algorithm that predicts $p$. Since the EMD between $p(i,j)$ and $p_{\text{true}}(i,j)$ lies between $[0,1]$, our EMD score that sums the individual EMDs for $i=0,1$ lies in $[0,2]$.

We also compute a similar score using the Jensen-Shannon divergence (JSD), which is another standard metric for measuring the similarity between two probability distributions. We compute the JSD between $p(i,1)$ and $p_{\text{true}}(i,1)$, and $p(i,0)$ and $p_{\text{true}}(i,0)$. As with the EMD based score, we compute the sum of these two JSD values and use it as our comparison metric.

\smallskip
\noindent {\bf Additional results.}
We now present some additional experimental plots comparing our algorithm against the different EM instances for the metrics of likelihood, EMD based score, JSD based score, and fraction of items predicted incorrectly, similar to Section \ref{sec:filter-exp-simulated}.

Figures \ref{fig:filter_frac5} and \ref{fig:filter_jsd5} plot the fraction of items predicted incorrectly and JSD score respectively for a selectivity of 0.5. We see that both our algorithm and EM perform comparably on these metrics and give high accuracy. Figure \ref{fig:filter_jsd7} plots the JSD score for a selectivity of 0.7. We observe that here our algorithm outperforms the aggregated $EM(*)$ algorithm, while $EM(1)$ is comparable.
\begin{figure*}[!t]
\subfigure{
\includegraphics[scale=0.3]{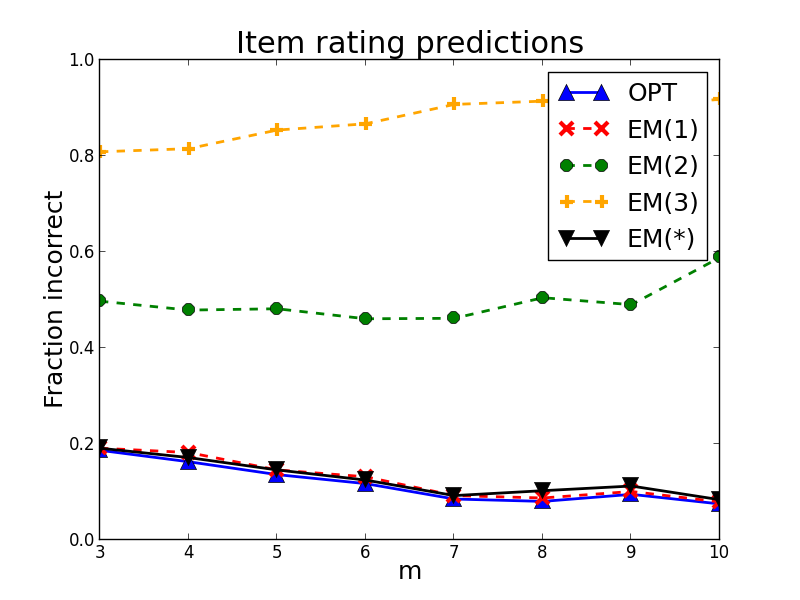}
\label{fig:filter_frac5}
}
\hspace{-15pt}
\subfigure{
\includegraphics[scale=0.3]{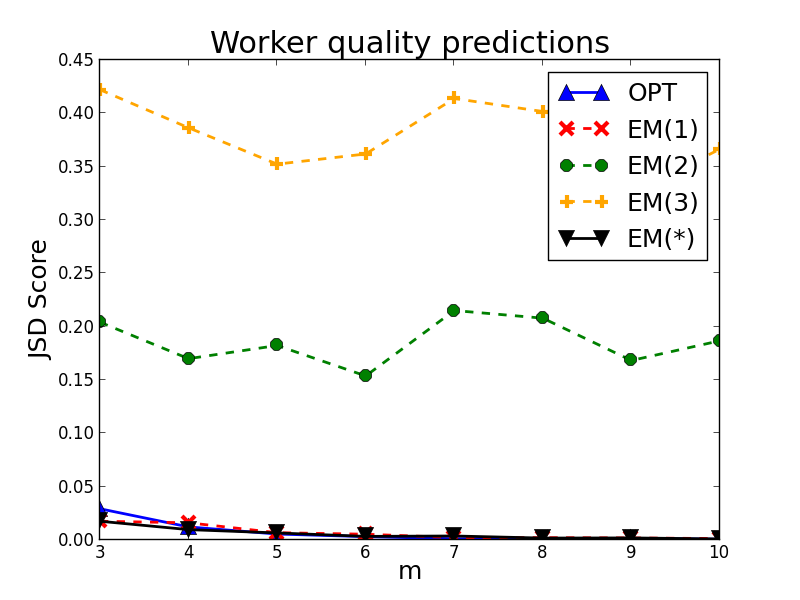}
\label{fig:filter_jsd5}
}
\hspace{-15pt}
\subfigure{
\includegraphics[scale=0.3]{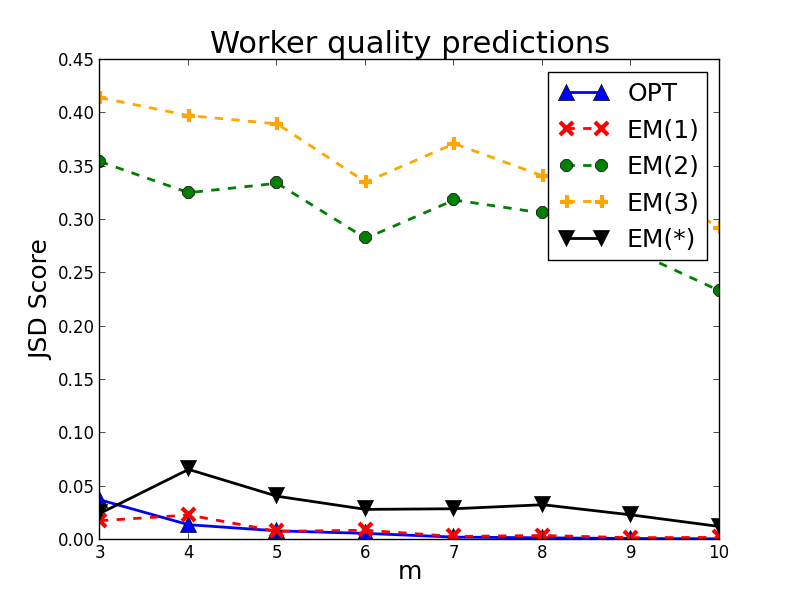}
\label{fig:filter_jsd7}
}
\vspace{-20pt}
\caption{Synthetic Data Experiments: (a)Fraction Incorrect, $s=0.5$ (b) JSD Score, $s=0.5$ (c) JSD Score, $s=0.7$}
\end{figure*}

We also generate synthetic data with the ground truth having a selectivity of 0.9, that is 90\% of items have a true value of 1 and 10\% have a true value of 0. We observe, from Figures \ref{fig:filter_frac9} and \ref{fig:filter_jsd9}, that truth our algorithm outperforms $EM(*)$, but does worse than $EM(1)$ over this highly skewed ground. We explain this effect at a high level with the following intuitive example: suppose all items had a true value of 1 and workers had a false negative error rate of 0.2. Then, we expect 80\% of all worker responses to be 1, and the remaining 20\% to be 0. Now for this worker response set, the pair $s=1,e_1=0.2$ is less likely than other less ``extreme'' solutions, $s=0.9,e_1=0.1$ for example. As a result, while $EM(1)$ readily converges to such extreme solution points, $OPT$ and $EM(*)$ find more likely solutions which for such highly skewed instances turn out to be less accurate. In practice it is often not possible to predict when a given dataset will be skewed or ``extreme''. Given such information, we can tune the $EM$ and $OPT$ algorithms to account for the skewness and find better solutions.
\begin{figure*}[!t]
\subfigure{
\includegraphics[scale=0.3]{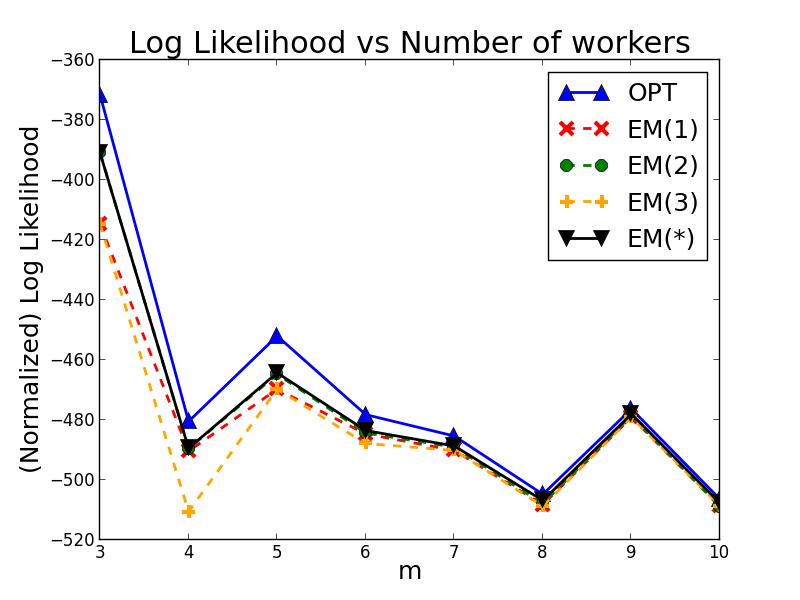}
\label{fig:like_frac9}
}
\hspace{-15pt}
\subfigure{
\includegraphics[scale=0.3]{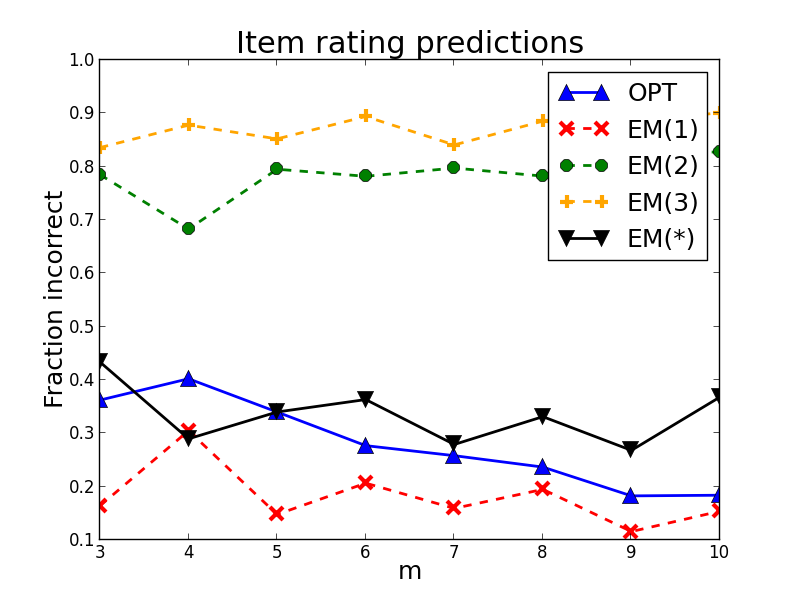}
\label{fig:filter_frac9}
}
\hspace{-15pt}
\subfigure{
\includegraphics[scale=0.3]{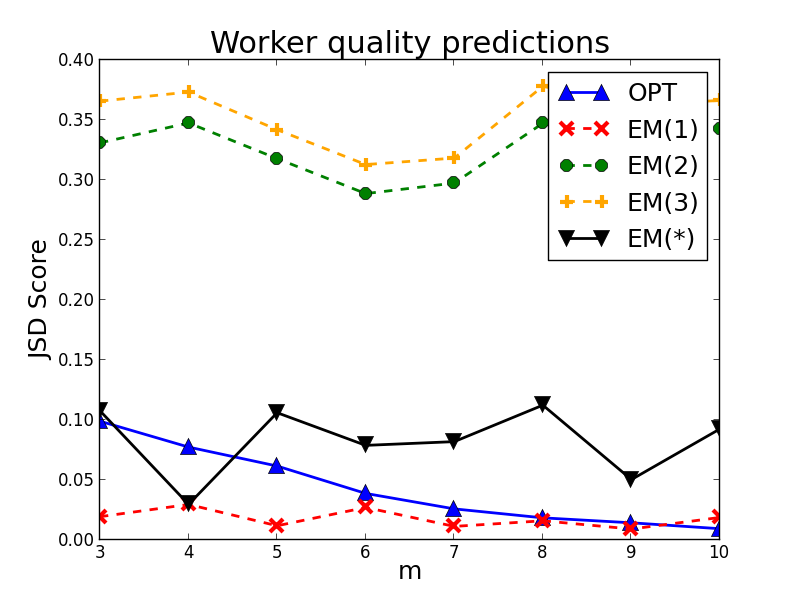}
\label{fig:filter_jsd9}
}
\vspace{-20pt}
\caption{Synthetic Data Experiments: (a) Likelihood, $s=0.9$ (b)Fraction Incorrect, $s=0.9$ (c) JSD Score, $s=0.9$}
\end{figure*}

\section{Rating}
\label{sec:appendix-rating}
\subsection{Experiments}
\label{sec:appendix-rating-experiments}
\smallskip
\noindent {\bf Metrics.} 
In this section, we present results over different input ground truth distributions over the same experimental setup described in Section \ref{sec:rating-exp}. Recall that there we describe our results for the case where items are equally divided across the rating values, that is, one-third each of the items have true ratings $1$, $2$ and $3$ respectively.

\begin{figure*}[!t]
\subfigure{
\includegraphics[scale=0.3]{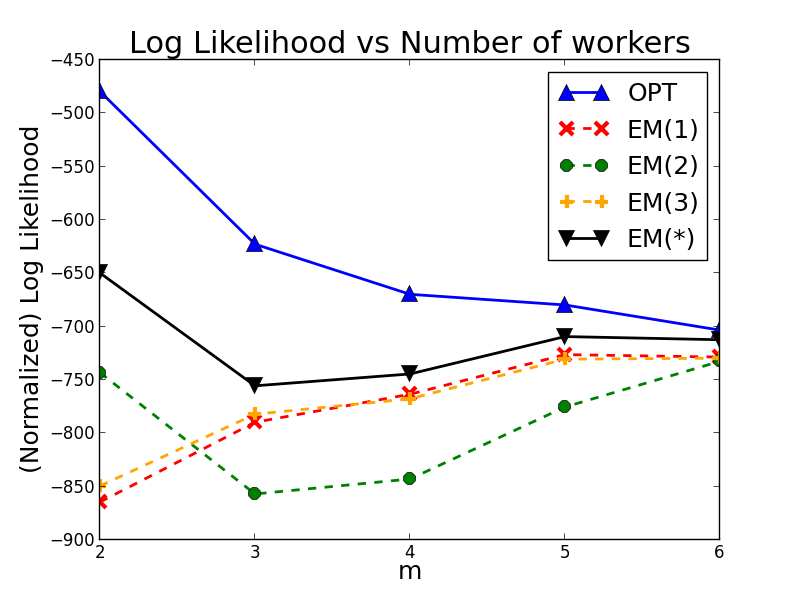}
\label{fig:rating-like2}
}
\hspace{-15pt}
\subfigure{
\includegraphics[scale=0.3]{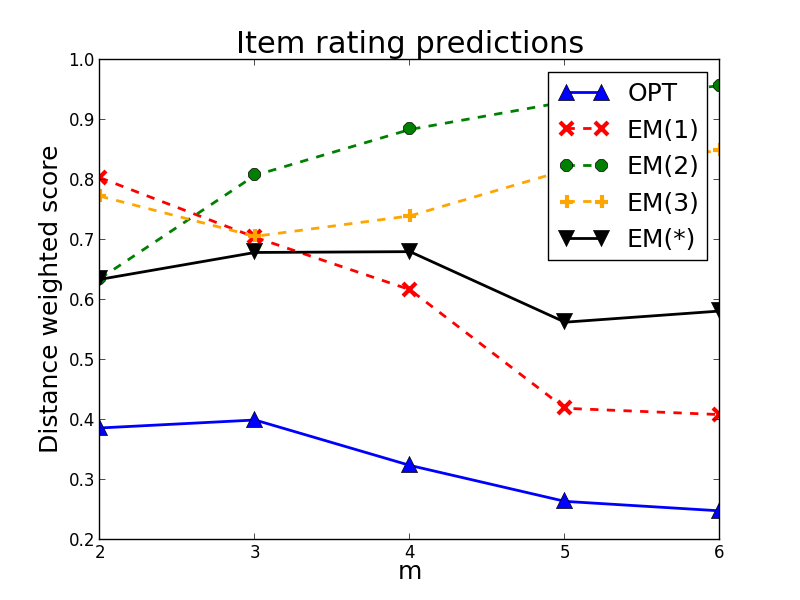}
\label{fig:rating-dist2}
}
\hspace{-15pt}
\subfigure{
\includegraphics[scale=0.3]{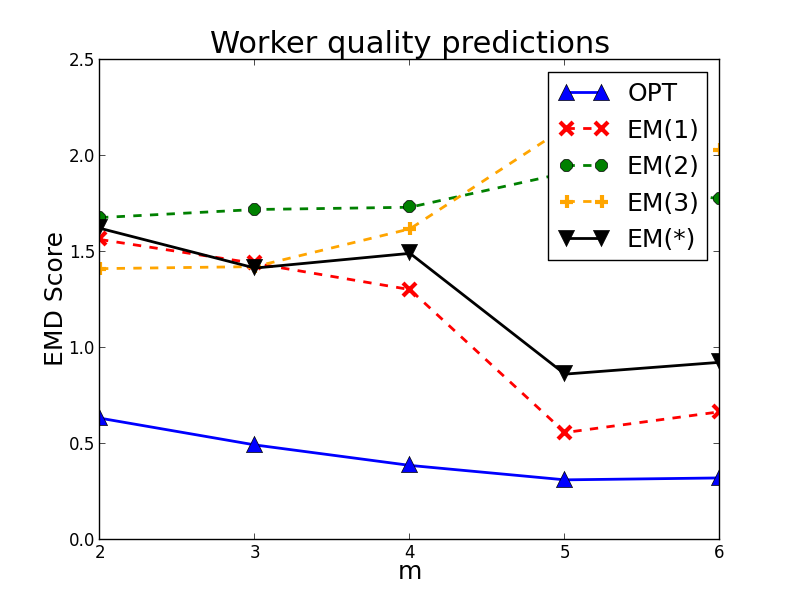}
\label{fig:rating-emd2}
}
\vspace{-20pt}
\caption{Synthetic Data Experiments: (a) Likelihood, $s=2$ (b)Distance Weighted Score, $s=2$ (c) EMD Score, $s=2$}
\label{fig:rating-2}
\end{figure*}

\begin{figure*}[!t]
\subfigure{
\includegraphics[scale=0.3]{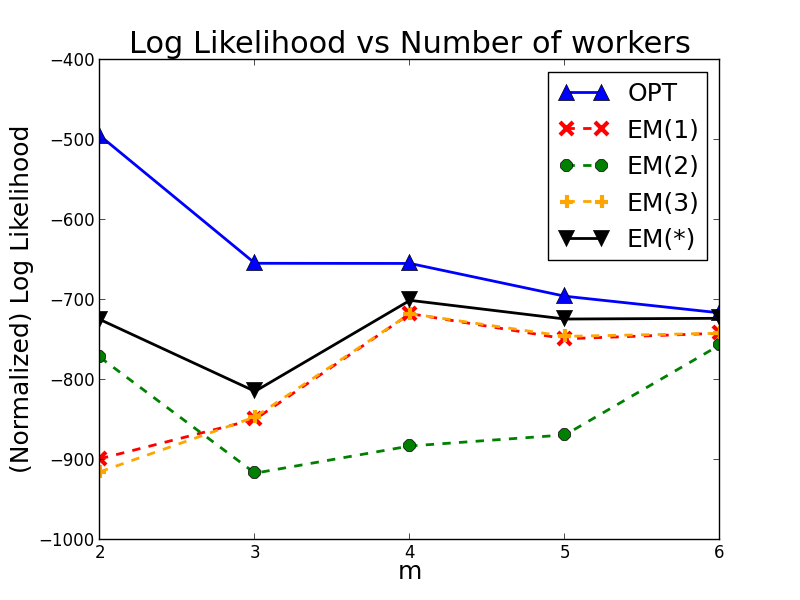}
\label{fig:rating-like3}
}
\hspace{-15pt}
\subfigure{
\includegraphics[scale=0.3]{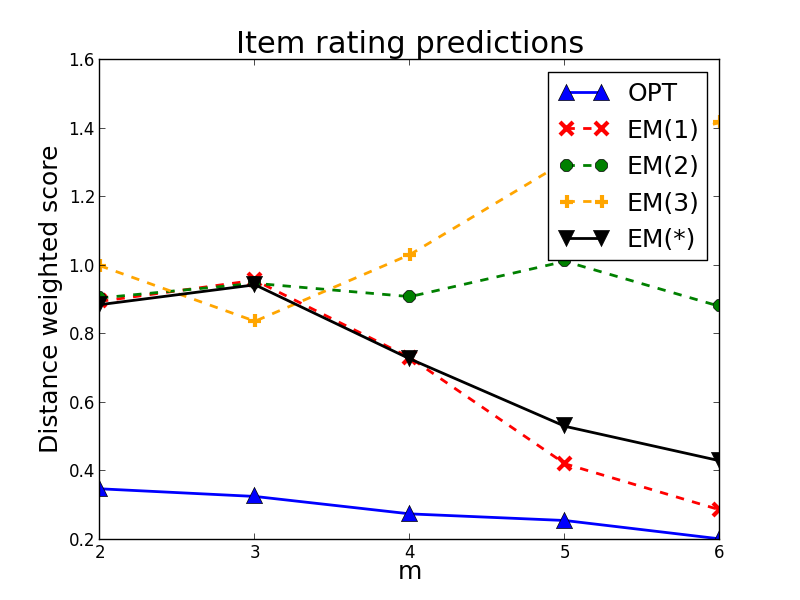}
\label{fig:rating-dist3}
}
\hspace{-15pt}
\subfigure{
\includegraphics[scale=0.3]{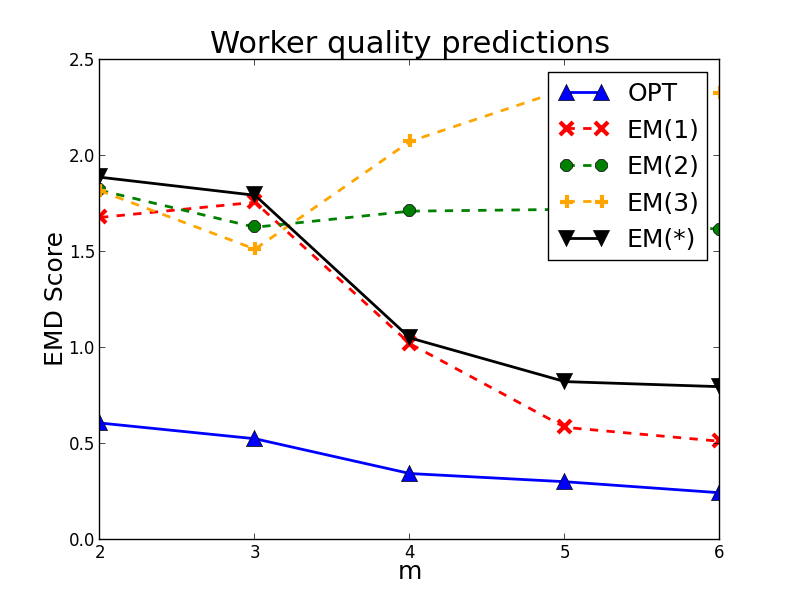}
\label{fig:rating-emd3}
}
\vspace{-20pt}
\caption{Synthetic Data Experiments: (a) Likelihood, $s=3$ (b)Distance Weighted Score, $s=3$ (c) EMD Score, $s=3$}
\label{fig:rating-3}
\end{figure*}

In addition to the distance-weighted item prediction score, and likelihood metrics described in Section \ref{sec:rating-exp}, we also run experiments on our synthetic data using the EMD based score described in Section \ref{sec:appendix-filter-exp}. Note that here the EMD score is based on the sum of $R=3$ pairwise EMD values corresponding to each column of the $3\times 3$ response probability matrix.

In Figure \ref{fig:rating-2} we plot experiments where 20\% of the items have ground truth rating 1, 60\% have ground truth rating 2 and 20\% have ground truth rating 3. We respresent this ground truth distribution, or selectivity vector by the notation $s=2$. In Figure \ref{fig:rating-3} ($s=3$) we plot experiments where 40\% of the items have ground truth rating 1, 20\% have ground truth rating 2 and 40\% have ground truth rating 3.

We observe that for all these experiments, the results are very similar to those seen in Section \ref{sec:rating-exp}, Figures \ref{fig:rating-like} and \ref{fig:rating-dist_wtd}. We observe that our algorithm finds more likely mappings (Figures \ref{fig:rating-like2}, \ref{fig:rating-like3}), predicts item ground truth ratings with higher accuracy (Figures \ref{fig:rating-dist2}, \ref{fig:rating-dist3}) and obtains a better estimate for the worker response probability matrix (Figures \ref{fig:rating-emd2}, \ref{fig:rating-emd3}) than all EM instances.

\section{Extensions}
\label{sec:appendix-extensions}
\subsection{Variable number of responses}
\label{sec:appendix-extensions-variable}
In this section, we calculate the number of dominance-consistent mappings for the filtering problem where different items can each receive a different number of worker responses. Let $m$ be the maximum number of worker responses that any item receives. Recall from Section \ref{sec:extensions-variable}that we can represent the set of all possible item response sets in the dominance-DAG shown in Figure \ref{fig:ext-1}.

Let $f$ be any dominance-consistent mapping under this setting. Let $(i,j)$ be a response set with $i$ responses of 1 and $j$ responses of $0$ such that $f(i,j)=1$. Then, by our dominance constraint, we know that $f(i+\Delta i, j+\Delta j)=1\forall \Delta i\geq 0, \Delta j\geq 0$.  We represent this figuratively in Figure \ref{fig:ext-2}. If $f(i,j)=1$, then all response sets, or points, in the shaded area also get mapped to a value of 1. 

\begin{figure}
\centering
\includegraphics[scale=0.45]{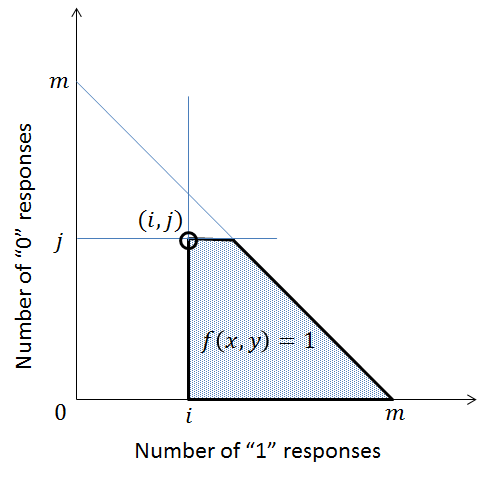}
\vspace{-10pt}
\caption{Dominance constraint}
\label{fig:ext-2}
\end{figure}

Now, by applying the dominance constraint as shown in Figure \ref{fig:ext-2} to every $(i,j)$ such that $f(i,j)=1$, we can show that $f$ can now be described intuitively by its ``boundary''. We demonstrate this intuition in Figure \ref{fig:ext-3}. Every point, $(i,j)$, on $f$'s boundary satisfies $f(i,j)=1$. Additionally, every point ``within'' the boundary, every point that is to the right of and below the boundary gets mapped to a value of $1$ under $f$. Finally, every point that is not on or within the boundary gets mapped to a value of $0$ under $f$. 
\begin{figure}
\centering
\includegraphics[scale=0.45]{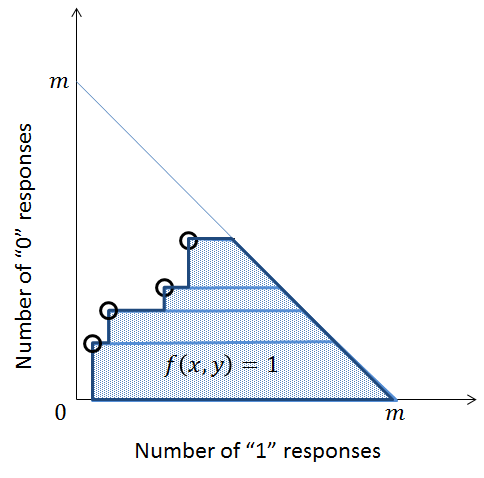}
\vspace{-10pt}
\caption{Dominance constraint}
\label{fig:ext-3}
\end{figure}
It follows from the dominance constraint that every dominance-consistent mapping, $f$, can be represented by a unique such continuous boundary. Furthermore, it is easy to see that any such boundary satisfies three conditions:
\begin{denselist}
\item Its leftmost (corner) point lies on one of the axes.
\item Its topmost (corner) point lies on the line $x+y=m$.
\item If $(x_1,y_1)$ and $(x_2,y_2)$ lie on the boundary, then $x_1\leq x_2\Leftrightarrow y_1\leq y_2$.
\end{denselist}
Intuitively the above three conditions give us a constructive definition for any dominance-consistent mapping's boundary. Every dominance-consistent mapping can be constructed uniquely as follows: (1) Choose a left corner point lying on one of the axes. (2) Choose a topmost corner point (necessarily above and to the right of the first corner point) lying on the line $x+y=m$. (3) Finally, define the boundary as a unique grid traversal from the left corner to the top corner where you are allowed to extend the boundary only to the right or upwards. Each such boundary corresponds to a unique dominance-consistent mapping where every point on or under the boundary is mapped to 1 and every other point is mapped to 0. Furthermore, every dominance-consistent mapping has a unique such boundary.

Therefore our problem of counting the number of dominance-consistent mappings for this setting reduces to counting the number of such boundaries. We use our constructive definition for the boundary to compute this number. First, suppose the leftmost corner point, $L=(p,0), 0\leq p\leq m$, lies on the $x$-axis (we can calculate similarly for $(0,q)$). Now, the topmost corner point lies to the right of the first corner point, and on the line $x+y=m$. Therefore it is of the form $T=(p+i,m-(p+i))$ for some $0\leq i\leq m-p$. The number of unique grid traversals (respectively boundaries) from $L$ to $T$ is given by ${m-p \choose i}$. Combining, we have the number of unique boundaries that have their left corner on the $x$-axis is $\overset{m}{\underset{p=0}\sum} \overset{m-p}{\underset{i=0}\sum }{m-p \choose i}=\overset{m}{\underset{p=0}\sum}2^{m-p}=2^{m+1}-1$. Calculating similarly for boundaries that start with their leftmost corner on the $y$-axis ($(0,q), 1\leq q\leq m$) and including the empty boundary (corresponding to the mapping where all items get assigned a value of 0), we get an additional $2^m$ boundaries. Therefore, we conclude that there are $O(2^m)$ such boundaries, corresponding to $O(2^m)$ dominance-consistent mappings. It should be noted that although this is exponential in the maximum number of worker responses to an item, typical values of $m$ are small enough that all mappings can very easily be enumerated and evaluated.

\end{document}